\newcommand{\pnodes}[1]{\nprounddigits{0}\numprint{#1}}
\newcommand{\psec}[1]{\nprounddigits{1}\npfourdigitnosep\numprint[s]{#1}}
\tikzset{
    state/.style={
		        rectangle,
            rounded corners,
            draw=black,
            minimum height=2em,
            minimum width=2em,
            align=center,
            }
}
\tikzset{
    statep/.style={
            circle,
            draw=black,
            minimum height=2em,
            minimum width=2em,
            align=center,
            }
}
\tikzstyle{acc}=[double]
\newcommand{\true}{{\ensuremath{\mathbf{tt}}}}
\newcommand{\false}{{\ensuremath{\mathbf{ff}}}}
\newcommand{\F}{{\ensuremath{\mathbf{F}}}}
\newcommand{\G}{{\ensuremath{\mathbf{G}}}}
\newcommand{\U}{{\ensuremath{\mathbf{U}}}}
\newcommand{\R}{{\ensuremath{\mathbf{R}}}}
\newcommand{\X}{{\ensuremath{\mathbf{X}}}}
\newcommand{\aft}{{\it af}}
\newcommand{\cnf}[1]{\mathsf{cnf}(#1)}
\newcommand{\dnf}[1]{\mathsf{dnf}(#1)}
\newcommand{\sff}[1]{\mathsf{sf}(#1)}
\newcommand{\conj}[1]{\mathsf{conj}(#1)}
\newcommand{\disj}[1]{\mathsf{disj}(#1)}
\newcommand{\support}[1]{\mathsf{support}(#1)}
\newcommand{\history}[1]{\ensuremath{\mathcal{H}(#1)}}
\newcommand{\historyunion}{\ensuremath{\sqcup}}
\newcommand{\historyinter}{\ensuremath{\sqcap}}
\newcommand{\automaton}[1]{\ensuremath{\mathcal{A}(#1)}}
\newcommand{\prodautomaton}[1]{\ensuremath{\mathcal{A}^\times(#1)}}
\newcommand{\cl}[1]{\ensuremath{\mathsf{cl}(#1)}}
\newcommand{\drop}[1]{\ensuremath{\mathsf{drop}(#1)}}
\newcommand{\gls}{\ensuremath{\mathsf{gls}}}
\newcommand{\qhold}{\ensuremath{q_{\mathsf{hold}}}}
\newcommand{\qacc}{\ensuremath{q_{\mathsf{acc}}}}
\newcommand{\qrej}{\ensuremath{q_{\mathsf{rej}}}}
\newcommand{\INF}[1]{\ensuremath{\mathit{Inf}(#1)}}
\newcommand{\FIN}[1]{\ensuremath{\mathit{Fin}(#1)}}
\newcommand{\ltltodstar}{\texttt{ltl2dstar}}
\newcommand{\ltltotgba}{\texttt{ltl2tgba}}
\newcommand{\spot}{\texttt{SPOT}}
\newcommand{\rabinizer}{\texttt{Rabinizer}}
\newcommand{\delag}{\texttt{Delag}}
\newcommand{\prism}{\texttt{PRISM}}
\newcommand{\pmin}[1]{\mathbf{P}_{\mathsf{min}}\left(#1\right)}
\newcommand{\pmax}[1]{\mathbf{P}_{\mathsf{max}}\left(#1\right)}
\colorlet{salomonColor}{cyan}
\colorlet{davidColor}{lime}
\newcommand{\rectangleWidth}{0.026 \textwidth}
\newcommand{\lift}[1]{\ensuremath{\uparrow#1}}
\newtheorem{definition}{Definition}
\newtheorem{theorem}{Theorem}
\newtheorem{lemma}{Lemma}
\title{LTL to Deterministic Emerson-Lei Automata}
\author{David M\"uller 
\institute{Technische Universität Dresden}
\thanks{This work is funded by the DFG-project BA-1679/12-1 and partially funded
    by the DFG Research Training Group \enquote{QuantLA: Quantitative Logics and
    Automata} (GRK 1763)}
\email{david.mueller2@tu-dresden.de}
\and Salomon Sickert
\institute{Technische Universität München}
\thanks{This work is funded by the DFG Research Training Group \enquote{PUMA: Programm- und Modell-Analyse} (GRK 1480)}
\email{sickert@in.tum.de}}
\begin{document}

\maketitle

\begin{abstract}
We introduce a new translation from linear temporal logic (LTL) to deterministic
Emerson-Lei automata, which are $\omega$-automata with a Muller acceptance
condition symbolically expressed as a Boolean formula. The richer
acceptance condition structure allows the shift of complexity from the state
space to the acceptance condition. Conceptually the construction is an enhanced
product construction that exploits knowledge of its components to reduce the
number of states. We identify two fragments of LTL, for which one can easily
construct deterministic automata and show how knowledge of these components can
reduce the number of states. We extend this idea to a
general LTL framework, where we can use arbitrary LTL to deterministic automata
translators for parts of formulas outside the mentioned fragments. Further, we
show succinctness of the translation compared to existing construction. The
construction is implemented in the tool \delag, which we evaluate 
on several benchmarks of LTL formulas and probabilistic model checking case studies.
\end{abstract}

\vspace{-0.8em}

\section{Introduction}

Deterministic $\omega$-automata play an essential role in the verification of probabilistic systems and in the synthesis of reactive systems, which generally prohibit a direct use of non-deterministic automata. However, determinisation of non-deterministic automata may cause an exponential blow-up, which makes these applications computationally hard. Hence there exists a long line of research aiming at shrinking the size of the generated deterministic automata as far as possible. All these translations have in common that they target a specific acceptance condition, such as Rabin, Streett, or Parity, and thus have to sometimes store progress information of the acceptance condition in the state.

In this work, we reexamine the Muller acceptance condition with a crucial twist: Instead of an explicit representation, we represent the acceptance condition in a symbolic fashion, as presented in \cite{Hanoi-CAV15}, which we call Emerson-Lei acceptance. Moving to a compactly expressed acceptance condition allows us to reduce the number of states and to use fewer acceptance sets compared to existing translations, although there is a well-known exponential lower bound for the size of deterministic $\omega$-automata starting from a non-deterministic $\omega$-automaton \cite{SV89}. Of course algorithms need to be adapted to this more complex scenario, but we present examples where this reduces the time needed for probabilistic model checking.

\paragraph{Related Work.}

There are two lines of research to cope with the exponential blow-up caused by determinisation of $\omega$-automata. The first explores restricted forms of non-determinism that are still usable for probabilistic verification, such as limit-deterministic automata \cite{Vardi85,DBLP:journals/jacm/CourcoubetisY95,SEJK16,SK16} or good-for-games-automata \cite{HP06, KMBK14} for Markov decision processes, or unambiguous B\"uchi automata for Markov chains \cite{BKKKMW16}. The authors of \cite{HLST015} try to avoid the full Safra's determinisation by under-approximating and over-approximating it via break-point and powerset construction. In the context of synthesis, one can evade determinisation using universal co-Büchi tree automata instead of deterministic parity automata \cite{KPV06}.

The second line of research aims at reducing the size of the state space of the resulting deterministic automaton. The most prominent determinisation method, Safra's determinisation, translates a non-deterministic B\"uchi into a deterministic Rabin automaton \cite{Safra88}. This translation is implemented in \ltltodstar{} with several heuristics \cite{KB06,KB07}. In the last decades there has been a lot of progress on determinisation of Büchi automata refining Safra's construction \cite{Piterman07,KW08,MS08,Schewe09,DBLP:journals/fuin/Redziejowski12,FL15}. While there still remains the exponential lower bound, efficient implementations are also available in \spot{} \cite{DBLP:conf/atva/Duret-LutzLFMRX16}. There has been also work on direct translations starting with fragments or even full LTL, see the history of \rabinizer{} \cite{BBKS13, EsparzaKS16}. The approach of \cite{SEJK16} originates from the same family of translations, which together with \cite{EKRS17}, yields an asymptotically optimal translation from LTL (via limit-deterministic automata) to Parity automata, which is implemented in \texttt{ltl2dpa}. The authors of \cite{MS10} follow a compositional approach where the LTL formula is brought into a normal form, decomposed, and then subformulas are translated separately. However, all these constructions target a specific acceptance condition structure --- Rabin, Streett, or Parity --- and thus sometimes need to encode the progress of the acceptance condition in the state space of the resulting automaton.
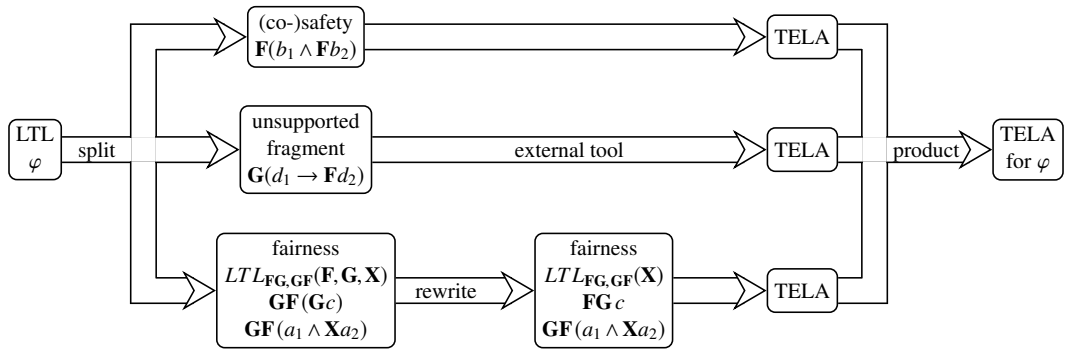
\begin{figure}[tbp] 
\centering
\scalebox{0.75}{
\begin{tikzpicture}
    \node[state,align=center,thick]   (ltl)   at (0,0)   {LTL\\\(\varphi\)};
    
    \node[rectangle,align=center, minimum width=\rectangleWidth, minimum height=\rectangleWidth,draw=white, line width=0]           (split)       at (0.12 \textwidth, 0)               {};

    \node[state,align=center,thick]   (safety)   at(0.3 \textwidth, 2) {(co-)safety\\\(\F (b_1 \wedge \F b_2)\)};
    \node[state,align=center,thick]   (tela1)    at(0.85 \textwidth, 2)     {TELA};

    \node[state,align=center,thick]   (fairness1)    at(0.3 \textwidth, -2.5) {fairness\\\(LTL_{\F\G, \G\F}(\F, \G, \X)\)\\\(\G \F \, (\G c)\)\\\(\G \F \, (a_1 \wedge \X a_2)\)};
    \node[state,align=center,thick]   (fairness2)   at (0.63 \textwidth, -2.5) {fairness\\\(LTL_{\F\G, \G\F}(\X)\)\\\(\F \G \, c\)\\\(\G \F \, (a_1 \wedge \X a_2)\)};
    \node[state,align=center,thick]   (tela2)       at (0.85 \textwidth, -2.5)    {TELA};

    \node[state,align=center,thick]   (unsupported)   at(0.3 \textwidth, 0) {unsupported\\fragment\\\(\G (d_1 \rightarrow \F d_2)\)};
    \node[state,align=center,thick]   (tela3)         at(0.85 \textwidth, 0)  {TELA};

    \node[rectangle,align=center, minimum width=\rectangleWidth, minimum height=\rectangleWidth,draw=white, line width=0]    (product)     at(0.93 \textwidth, 0)      {};

    \node[state,align=center,thick]   (prodTela)    at(1.1\textwidth, 0)  {TELA\\for \(\varphi\)};

    \draw[thick,-, double distance = 1.1 em]              (ltl) to (split);
    \draw[thick,-, decoration = {text along path, text={split}, text align = {align = center}, raise=-0.75 ex}, decorate]              (ltl) to (split);
    \draw[thick,-{Stealth[length=6mm, width=7mm, fill=white]}, double distance = 1.1 em]             (split)       |-                                      (fairness1);
    \draw[thick, double distance = 1.1 em, -{Stealth[length=6mm, width=7mm, fill=white]}]             (fairness1)   to                (fairness2);
    \draw[thick, decoration = {text along path, text={rewrite}, text align = {align = left, left indent=2 ex}, raise=-0.75 ex}, decorate, -]             (fairness1)   to                (fairness2);
    \draw[thick, double distance = 1.1 em, -{Stealth[length=6mm, width=7mm, fill=white]}]             (fairness2)   to                                                (tela2);
    \draw[thick, double distance = 1.1 em, -]              (tela2)       -|                                      (product);

    \draw[thick, double distance = 1.1 em, -{Stealth[length=6mm, width=7mm, fill=white]}]             (split)       |-                                       (safety);
    \draw[thick, double distance = 1.1 em, -{Stealth[length=6mm, width=7mm, fill=white]}]             (safety)      to                                                (tela1); 
    \draw[thick, double distance = 1.1 em, -]              (tela1)       -|                                       (product);

    \draw[thick, double distance = 1.1 em, -{Stealth[length=6mm, width=7mm, fill=white]}]             (split)       to                                                  (unsupported);
    \draw[thick, double distance = 1.1 em, -{Stealth[length=6mm, width=7mm, fill=white]}]             (unsupported) to                       (tela3);
    \draw[thick, decoration = {text along path, text={external tool}, text align = {align = center}, raise=-0.75 ex}, decorate, -]             (unsupported) to                       (tela3);
    \draw[thick, double distance = 1.1 em, -]              (tela3)       to                                                  (product);

    \draw[thick, double distance = 1.1 em,-{Stealth[length=6mm, width=7mm, fill=white]}]             (product)     to                  (prodTela);
    \draw[thick, decoration = {text along path, text={product}, text align = {align = left, left indent=0.7 ex}, raise=-0.75 ex}, decorate,-]             (product)     to           (prodTela);
\end{tikzpicture}
} \caption{The input LTL formula is split up, each subformula is translated independently, and then a product automaton is constructed, as can be seen for the example \(\varphi = \G \F \, (a_1 \wedge \X \, a_2) \wedge \F \, (b_1 \wedge \F \, b_2) \wedge \G \F \, (\G c) \wedge \G \, (c_1 \rightarrow \F c_2)\).}
\label{fig:outline}
\end{figure}
\paragraph{Contribution.}

We present a translation from LTL to deterministic Emerson-Lei automata that trades a compact state space for a more complex acceptance condition structure.  There has been previously the idea of a product construction relying on known translations in \cite{Hanoi-CAV15} to obtain a more complex acceptance condition. Here, we give a direct translation of fragments of LTL without an intermediate step over non-deterministic automata. We consider special liveness properties in particular and give a translation based on buffers. For safety and cosafety LTL formulas we rely on the $\aft$ function \cite{EsparzaKS16,SEJK16} computing the left-derivative directly on the formula. Additionally, if we encounter a subformula not contained in our supported fragments for a direct translation, we rely on external tools for translation, and compose a deterministic automaton for the overall formula. A general scheme for our approach is depicted in \Cref{fig:outline}, which we implemented in the tool \delag{} (Deterministic Emerson-Lei Automata Generator).

We conducted several experiments to evaluate the practical impact of this
idea: At first we compared the size of the automata measured in state
space size as well as acceptance sizes for our tool and several other tools like
\spot{} and \rabinizer{}. Secondly, we performed a case study (IEEE 802.11 Wireless LAN
Handshaking protocol) and also compared it with \spot{} and \rabinizer{}. On
both sides, we could show the potential of \delag{}, i.e., allowing arbitrary
acceptance conditions to obtain smaller automata. The implementation and
additional material can be found at \cite{gandalf17web}. %

\section{Preliminaries}

\subsection{Linear Temporal Logic}

We consider standard linear temporal logic (LTL) with all negations pushed down to the propositions.

\begin{definition}[LTL]\label{ltl}
A formula of LTL in {\em negation normal form} over a finite set of atomic propositions $(Ap)$ is given by the syntax:
\begin{align*}
\varphi::= & \; \true \mid \false \mid a \mid \neg a \mid \varphi \wedge \varphi \mid \varphi \vee \varphi \mid \X \varphi \mid \varphi \U \varphi \mid \varphi \R \varphi \qquad \text{with} ~ a \in Ap
\end{align*}
\noindent Furthermore, we introduce the abbreviations: $\F \varphi := \true \U \varphi$, $\G \varphi := \false \R \varphi$. 
An $\omega$-word $w$ is an infinite sequence of sets of atomic propositions $w[0] w[1] w[2] \cdots$ and we denote the infinite suffix $w[i] w[i+1] \cdots$ by $w_{i}$. The satisfaction relation $\models$ between $\omega$-words and formulas is inductively defined as follows: %
\[\begin{array}[t]{lclclcl}
w \models \true & & w \not\models \false \\
w \models a & \mbox{ if{}f } & a \in w[0] \\
w \models \neg a & \mbox{ if{}f } & a \not \in w[0] \\
w \models \varphi \wedge \psi & \mbox{ if{}f } & w \models \varphi \text{ and } w \models \psi\\
w \models \varphi \vee \psi & \mbox{ if{}f } & w \models \varphi \text{ or } w \models \psi
\end{array}
\qquad
\begin{array}[t]{lclclcl}
w \models \X \varphi & \mbox{ if{}f } & w_1 \models \varphi\\
w \models \varphi \U \psi & \mbox{ if{}f } & \exists i. \, w_i \models \psi \text{ and } \forall j < i. w_j \models \varphi \\
w \models \varphi \R \psi & \mbox{ if{}f } & \forall i. \, w_i \models \psi \text{ or }  \\ 
& & \exists i. \,  \, w_i \models \varphi \text{ and } \forall j \leq i. w_j \models \psi
\end{array}\]
Two formulas $\varphi, \psi$ are called equivalent, denoted $\varphi \equiv \psi$, if $w \models \varphi \leftrightarrow w \models \psi$ for all $w \in (2^{Ap})^\omega$. $\sff{\varphi}$ is defined as the set of temporal subformulas ($\U, \R, \X$) not nested within the scope of another temporal operator, e.g., $\sff{(\F\G a) \vee \X b} = \{\F\G a, \X b\}$.
\end{definition}

\subsection{Fragments of LTL}

We study several syntactic fragments of LTL. Let us denote by $LTL(\mathcal{X})$ the syntactic restriction of LTL to the temporal operators of $\mathcal{X}$. Furthermore we allow to denote prefixes that are applied to all formulas by a subscript: $LTL_{X, Y}(\mathcal{X}) = \{X \varphi, Y \varphi \mid \varphi \in LTL(\mathcal{X})\}$. We now identify three (well-known) syntactic LTL fragments commonly used in system property specifications: 

\begin{itemize}
	\item safety: $LTL(\R, \X)$
	\item reachability (or cosafety): $LTL(\U, \X)$
	\item fairness: $LTL_{\F\G, \G\F}(\F, \G, \X)$
\end{itemize}

\noindent We now show that the last fragment can be simplified to formulas without nested $\F$ and $\G$:
 
\begin{theorem}[Fairness LTL Normal Form]
Let $\varphi$ be an $LTL_{\F\G, \G\F}(\F, \G, \X)$ formula. Then there exists an equivalent formula $\varphi' \equiv \varphi$ that is a boolean combination of formulas in $LTL_{\F\G, \G\F}(\X)$.
\end{theorem}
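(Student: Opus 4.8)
The plan is to show that the two prefixes $\G\F$ and $\F\G$ can be pushed through the body of the formula, collapsing every nested $\F$ and $\G$ until only $LTL(\X)$ formulas remain underneath. Write $\mathcal{C}$ for the class of Boolean combinations of $LTL_{\F\G, \G\F}(\X)$ formulas; this class is closed under $\wedge$, $\vee$ and $\neg$, since $\neg\F\G\chi \equiv \G\F\overline{\chi}$ and $\neg\G\F\chi \equiv \F\G\overline{\chi}$, where $\overline{\chi}$ is the negation normal form of $\neg\chi$ and still lies in $LTL(\X)$. It therefore suffices to prove, by induction on the number of $\F$- and $\G$-operators in $\psi\in LTL(\F,\G,\X)$, that \emph{both} $\G\F\psi$ and $\F\G\psi$ belong to $\mathcal{C}$; the theorem follows at once, because every $LTL_{\F\G, \G\F}(\F,\G,\X)$ formula $\varphi$ is of the form $\G\F\psi$ or $\F\G\psi$.

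The first ingredient is a normal-form lemma: every $\psi\in LTL(\F,\G,\X)$ is equivalent to a disjunction (dually: conjunction) over \emph{temporal literals} of the three shapes $\ell$, $\F\alpha$, $\G\beta$, where $\ell\in LTL(\X)$ is local and $\alpha,\beta\in LTL(\F,\G,\X)$. This is obtained by commuting every $\X$ with $\F$ and $\G$ and distributing it over $\wedge,\vee$ until it is absorbed into local subformulas, and then taking an ordinary propositional normal form over the literals viewed as atoms; none of these rewrites changes the number of $\F$- and $\G$-operators. The second ingredient is the pair of distributivities that the prefixes \emph{do} enjoy, $\G\F(\alpha\vee\beta)\equiv\G\F\alpha\vee\G\F\beta$ and $\F\G(\alpha\wedge\beta)\equiv\F\G\alpha\wedge\F\G\beta$, which let me split the prefix over the outer connective of the normal form. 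The base case $\psi\in LTL(\X)$ is immediate, as $\G\F\psi$ and $\F\G\psi$ are then literally building blocks of $\mathcal{C}$.

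The main obstacle is that $\G\F$ does \emph{not} distribute over $\wedge$, nor $\F\G$ over $\vee$, so after splitting over the outer connective I am left with $\G\F$ applied to a single conjunctive clause (dually $\F\G$ to a disjunctive clause). The heart of the proof is the factoring equivalence
\[
\G\F\Big(\ell \wedge \bigwedge\nolimits_{p} \F\alpha_p \wedge \bigwedge\nolimits_{q} \G\beta_q\Big)\;\equiv\;\G\F\ell \;\wedge\; \bigwedge\nolimits_{p}\G\F\alpha_p \;\wedge\; \bigwedge\nolimits_{q}\F\G\beta_q,
\]
which recovers the missing distributivity once the finite transient behaviour of the subformulas is taken into account: a conjunct $\F\alpha_p$ can hold at infinitely many positions only if $\alpha_p$ itself holds infinitely often (whence $\F\alpha_p$ holds everywhere), and a conjunct $\G\beta_q$ can hold at infinitely many positions only if it holds from some point on (equivalently $\F\G\beta_q$); so on the relevant tail all non-local conjuncts become vacuous and the clause reduces to the local requirement $\ell$. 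The right-hand side lies in $\mathcal{C}$: $\G\F\ell$ is a building block, while $\G\F\alpha_p$ and $\F\G\beta_q$ lie in $\mathcal{C}$ by the induction hypothesis, since $\alpha_p$ and $\beta_q$ contain strictly fewer $\F$- and $\G$-operators than $\psi$.

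The dual case $\F\G(\psi_1\vee\psi_2)$ is handled identically using the conjunctive normal form and the mirror factoring equivalence, so the induction closes. I expect this factoring step to be the only genuinely non-routine point; everything else (the commutation/absorption of $\X$, the collapse identities such as $\F\F\equiv\F$ and $\G\F\G\equiv\F\G$, and the two easy distributivities) is standard and is absorbed into the normal-form lemma and the semantic justification of the factoring equivalence.
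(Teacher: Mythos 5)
Your proposal is correct and is essentially the paper's own proof recast as a structural induction: your factoring equivalence is exactly the iterated form of the paper's rewrite rules $\G\F(\varphi \wedge \F\psi) \mapsto \G\F\varphi \wedge \G\F\psi$ and $\G\F(\varphi \wedge \G\psi) \mapsto \G\F\varphi \wedge \F\G\psi$ (and their $\F\G$ duals), your normal-form lemma plays the role of the paper's rules $\G\F(\varphi) \mapsto \G\F(\dnf{\varphi})$ and $\F\G(\varphi) \mapsto \F\G(\cnf{\varphi})$, and the distribution and collapse identities coincide. The only real difference is presentational: the paper asserts that exhaustive application of its rules terminates in the normal form, whereas your induction on the number of $\F$- and $\G$-operators makes that well-foundedness explicit and also spells out the $\X$-commutation step that the paper's rule list leaves implicit.
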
 

\begin{proof}
Exhaustive application of the following folklore equivalence-preserving rewrite rules, described in \cite{EH00,SomBloem00,LiSFZ16}, brings every fairness LTL formula into the desired normal form:
\[\begin{array}{rclrcl}
\F\G (\F \varphi) & \mapsto & \G\F\varphi & \G\F (\F \varphi) & \mapsto & \G\F\varphi \\
\F\G (\G \varphi) & \mapsto & \F\G\varphi & \G\F (\G \varphi) & \mapsto & \F\G\varphi \\
\F\G (\X \varphi) & \mapsto & \F\G\varphi & \G\F (\X \varphi) & \mapsto & \G\F \varphi \\
\F\G (\varphi \wedge \psi) & \mapsto & \F\G \varphi \wedge \F\G \psi & \G\F (\varphi \vee \psi) & \mapsto & \G\F \varphi \vee \G\F \psi \\
\F\G (\varphi \vee \F\psi)  & \mapsto & \F\G \varphi \vee \G\F \psi & \G\F (\varphi \wedge \F\psi) & \mapsto & \G\F \varphi \wedge \G\F \psi \\
\F\G (\varphi \vee \G\psi) & \mapsto & \F\G \varphi \vee \F\G \psi & \G\F (\varphi \wedge \G\psi) & \mapsto & \G\F \varphi \wedge \F\G \psi \\
\varphi \not \in LTL(\X) \Rightarrow \F\G (\varphi) & \mapsto & \F\G (\cnf{\varphi}) & \varphi \not \in LTL(\X) \Rightarrow \G\F (\varphi) & \mapsto & \G\F(\dnf{\varphi}) \\
\end{array}\]
with $\cnf{\varphi}$ and $\dnf{\varphi}$ denoting the translation into conjunctive and disjunctive normal form. 
\end{proof}

This translation might cause an exponential blow-up in formula size due to the translation into conjunctive and disjunctive normal form. However, the construction for fairness LTL to deterministic automata we present is only dependent on the size of the alphabet and the nesting depth of the $\X$-operators, which are both unchanged (or even decreased) by the translation. Further from now on we assume all fairness LTL formulas are rewritten to this normal form.

Apart from the rules listed above, our implementation uses several well-known simplification rules to rewrite formulas outside of the fairness fragment to formulas within, e.g., $\G\F (\varphi \U \psi) \mapsto \G\F \psi$ and $\F\G (\varphi \U \psi) \mapsto \G\F \psi \wedge \F\G (\varphi \vee \psi)$.

\subsection{Deterministic Emerson-Lei Automata}

Emerson-Lei automata are Muller automata with their acceptance condition expressed as a
\textit{generic acceptance condition} (see \cite{Hanoi-CAV15}): Instead of representing every Muller set 
explicitly a symbolic representation is used. We will take as acceptance condition a
Boolean combination over the atomic propositions $\FIN{P}$ and $\INF{P}$ where
$P$ is an arbitrary subset of transitions of an
$\omega$-automaton $\mathcal{A}$. We denote the set of all generic acceptance
conditions by $\mathcal{C}_\delta$.

\begin{definition}[Deterministic Transition-Based Emerson-Lei Automata]
A deterministic transition-based Emerson-Lei automaton (TELA) is a tuple $\mathcal{A} =
(Q, \Sigma, \delta, q_0, \alpha)$
where $Q$ is a finite set of states, $\Sigma$ is an alphabet,  $\delta \colon Q \times \Sigma \rightarrow Q$ is a transition function, $q_0$ is the initial state, and $\alpha \in \mathcal{C}_\delta$ is a generic acceptance condition.  Furthermore we use a superscript to denote a component of a specific automaton, e.g., $\delta^{\mathcal{A}}$ is the transition function of $\mathcal{A}$.
\end{definition}

For convenience we sometimes interpret the transition function as a relation and write $(q, a, q') \in \delta$ instead of $q' = \delta(q,a)$. 
A run $\rho$ of a TELA $\mathcal{A}$ on the $\omega$-word $w$ is an infinite
sequence of transitions $\rho = (q_0, w[0], q_1)(q_1, w[1], q_2)\cdots$
respecting the transition function, i.e. $\rho[i] = (q_i, w[i], q_{i+1}) \in \delta$ for every $i \geq
0$. We denote by $\inf(\rho)$ the set of transitions occurring
infinitely often in the run. A run is called accepting for $\FIN{P}$ if $\inf(\rho) \cap P =
\emptyset$ and accepting for $\INF{P}$ if $\inf(\rho) \cap P \neq \emptyset$.
For arbitrary acceptance conditions $\varphi$, i.e., Boolean combinations of
$\INF{P}$ and $\FIN{P}$, a run is accepting if $\inf(\rho)$ satisfies $\varphi$
in the expected way. 
All well-known acceptance conditions, such as Büchi, Rabin, Streett, and Parity, can be expressed easily using this mechanism.

Since $\INF{P}$ and $\FIN{P}$ are dual, one can complement a deterministic TELA
just by complementing the acceptance condition, i.e., replacing every occurrence
of $\INF{P}$ with $\FIN{P}$, every occurrence of $\FIN{P}$ with $\INF{P}$, and
every disjunction with conjunction and every conjunction with a disjunction.

\section{Construction}

The automaton is constructed from an LTL formula as a product of smaller automata for each temporal subformula. We identified several fragments of LTL in the preliminaries and now present specialised constructions for each of them. While the standard product construction yields an automaton in the size of the product of all automata in the worst-case, the structure of the formula enables us to propagate information, such that we can suspend or disable automata of the product depending on the context.

Consider the following parametric formula: $\G \F (a_1 \wedge \X (a_2 \wedge \dots \X a_m)) \wedge \F (b_1 \wedge \F (b_2 \wedge \dots \F b_n))$. We will later demonstrate that the propagation of information allows us to construct a Büchi automaton of size $O(n + m)$, while \spot{} in the standard configuration yields automata of size $O(n \cdot m)$ and only after enabling simulation-based reductions this decreases to sizes comparable to our automata. Let us now examine the construction, while we translate the formula $\G \F (a_1 \wedge \X a_2) \wedge \F (b_1 \wedge \F b_2)$.

\subsection{Fairness-LTL}
\label{sec:fairness-ltl}

First, we consider the fairness fragment and show that there is a natural way to represent it as deterministic automata. In particular, if we look at Boolean combinations of fairness-LTL formulas ($LTL_{\F \G, \G \F}(\X)$), we  obtain an acceptance condition mirroring the structure of the input formula. Furthermore, if the formula does not contain any $\X$, the automata we obtain is a single-state automaton. For all other formulas we need to store a bounded history in the form of a FIFO-buffer of seen sets of atomic propositions (or valuations). We will now establish the tools necessary to compute the structure of such a buffer. We use the following operations defined on finite and infinite sequences of sets (assuming $n \leq m$):

\[\begin{array}{rclcl}
\text{Pointwise Intersection:} & ~ & u[0] u[1] \dots \historyinter v[0] \dots v[m] & = & (u[0] \cap v[0]) \dots (u[m] \cap v[m]) \emptyset^\omega \\
\text{Pointwise Union:} & ~  & u[0] \dots u[n] \historyunion v[0] \dots v[m] & = & (u[0] \cup v[0]) \dots (u[n] \cup v[n]) \dots v[m] \\
\text{Forward Closure:} & ~ &\cl{w[0] \dots w[n]} & = & w[0] (w[0] \cup w[1]) \dots \bigcup_{k = 0}^n w[k] \\
\text{Drop Last Set of Letters:} & ~ & \drop{w[0] \dots w[n] w[n+1]} & = & w[0] \dots w[n] \\
\end{array}\]

\paragraph{Relevant History.}
Let us consider our example formula: $\G \F (a_1 \wedge \X a_2)$. In order to check whether $w \models a_1 \wedge \X a_2$ holds we just need to know whether $a_1 \in w[0]$ and $a_2 \in w[1]$ holds. The rest of the $w$ can be projected away. The \textit{relevant history} $\mathcal{H}(\varphi)$ for an LTL formula $\varphi$ is a finite word over $2^{AP}$ and masks all propositions that are irrelevant for evaluating $\varphi$. We compute the relevant history $\mathcal{H}$ recursively from the structure of the formula: 
\[\mathcal{H}  \colon LTL(\X) \rightarrow (2^{Ap})^*\]
\[\begin{array}[t]{rclcrcl}
\history{\true} & = & \epsilon & \qquad & \history{\false} & = & \epsilon \\
\history{a} & = & \{a\} & \qquad & \history{\neg a} & = & \{a\} \\
\history{\varphi \wedge \psi} & = &  \history{\varphi} ~ \historyunion ~ \history{\psi} & \qquad & \history{\varphi \vee \psi} & = & \history{\varphi} ~ \historyunion ~ \history{\psi}
\end{array} \qquad \history{\X \varphi} = \emptyset\history{\varphi}\]

\begin{lemma}\label{lem:relevant-history}
Let $\varphi$ be an $LTL(\X)$ formula and let $w$ be a $\omega$-word. Then $w \models \varphi$ if and only if $w ~ \historyinter ~ \history{\varphi} \models \varphi$.
\end{lemma}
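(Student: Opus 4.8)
The plan is to prove the equivalence by structural induction on the $LTL(\X)$ formula $\varphi$, since both $\history{\cdot}$ and the satisfaction relation restricted to $LTL(\X)$ are defined by recursion on the formula structure. Throughout I abbreviate $u := w \historyinter \history{\varphi}$ and record the one fact I keep reusing: by the definition of pointwise intersection, $u[i] = w[i] \cap \history{\varphi}[i]$ whenever $i$ lies within the length of $\history{\varphi}$, and $u[i] = \emptyset$ otherwise.

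For the base cases the claim follows by unfolding definitions: for $\true$ and $\false$ the truth value is independent of the word; for $\varphi = a$ we have $\history{a} = \{a\}$, so $u[0] = w[0] \cap \{a\}$ and hence $a \in u[0] \iff a \in w[0]$, and symmetrically for $\neg a$. The case $\varphi = \X\psi$ reduces to the inductive hypothesis by a shift argument: since $\history{\X\psi} = \emptyset\,\history{\psi}$, the zeroth letter of $u$ is masked to $\emptyset$ (which is irrelevant for $\X\psi$), and the suffix satisfies $u_1 = w_1 \historyinter \history{\psi}$; therefore $u \models \X\psi \iff u_1 \models \psi \iff w_1 \models \psi \iff w \models \X\psi$, where the middle step is the inductive hypothesis applied to $\psi$ and the word $w_1$.

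The hard part will be the Boolean cases $\varphi = \psi_1 \wedge \psi_2$ and $\varphi = \psi_1 \vee \psi_2$, where the obstacle is that $\history{\varphi} = \history{\psi_1} \historyunion \history{\psi_2}$ is pointwise \emph{larger} than either $\history{\psi_i}$, so the inductive hypotheses — which mask by $\history{\psi_i}$, not by $\history{\varphi}$ — do not apply to $u$ directly. I would resolve this with an auxiliary absorption observation: because $\history{\psi_i}[j] \subseteq \history{\varphi}[j]$ at every position $j$ (padding the shorter words with $\emptyset$ beyond their length), intersecting first with the larger word and then with the smaller one collapses, that is $(w \historyinter \history{\varphi}) \historyinter \history{\psi_i} = w \historyinter \history{\psi_i}$. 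Chaining the inductive hypothesis for $\psi_i$ once with the word $u$ and once with the word $w$ then yields $u \models \psi_i \iff (u \historyinter \history{\psi_i}) \models \psi_i \iff (w \historyinter \history{\psi_i}) \models \psi_i \iff w \models \psi_i$, after which the conjunction (respectively disjunction) case closes by combining the two equivalences. The only routine point to check is the pointwise containment $\history{\psi_i} \subseteq \history{\psi_1} \historyunion \history{\psi_2}$ straight from the definition of $\historyunion$, which holds since each component of a pointwise union contains the corresponding components of its arguments, with positions beyond a word's length treated as $\emptyset$.
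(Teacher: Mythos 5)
Your proof is correct and follows essentially the same structural induction as the paper, with identical treatment of the base cases and of $\X$ via the shift $u_1 = w_1 \historyinter \history{\psi}$. The only difference is that in the Boolean cases the paper compresses into a single ``iff'' the step from $w \historyinter \history{\psi} \models \psi$ and $w \historyinter \history{\psi'} \models \psi'$ to $w \historyinter (\history{\psi} \historyunion \history{\psi'}) \models \varphi$, whereas your absorption identity $(w \historyinter \history{\varphi}) \historyinter \history{\psi_i} = w \historyinter \history{\psi_i}$ together with a second application of the inductive hypothesis (to the masked word $u$) is exactly the justification that compressed step requires, so your version is, if anything, the more complete one.
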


\begin{proof}
By induction on $\varphi$. For succinctness we just exhibit two cases and all other cases are analogous. 

\noindent
\textbf{Case $\varphi = \X \psi$.} Then $w \models \varphi$ iff $w_1 \models \psi$ iff $w_1 \historyinter \history{\psi} \models \psi$ iff $\emptyset (w_1 \historyinter \history{\psi}) \models \varphi$ iff $w \historyinter \history{\varphi} \models \varphi$.

\noindent
\textbf{Case $\varphi = \psi \wedge \psi'$.} Then $w \models \varphi$ iff $w \models \psi \wedge w \models \psi'$ iff $w \historyinter \history{\psi} \models \psi \wedge w \historyinter \history{\psi'} \models \psi'$ iff $w \historyinter (\history{\psi} \historyunion \history{\psi'}) \models \varphi$ iff $w \historyinter \history{\varphi} \models \varphi$.
\end{proof}

The TELA we are constructing keeps a buffer masked by $\mathcal{H}$. Intuitively the automaton delays the decision whether $\varphi$ holds by $n = | \history{\varphi} | - 1$ steps and then decides whether it holds true, instead of non-deterministically guessing the future and verifying this guess as done in standard LTL translations.

\begin{definition}
Let $\varphi$ be an $LTL(\X)$ formula over $Ap$ and let $n = \mathsf{max}(|\history{\varphi}| - 1, 0)$. We then define one TELA for $\G\F \varphi$:
\[\automaton{\G\F \varphi} = (Q, 2^{Ap}, \delta, \emptyset^n, \INF{\alpha})\]
\[\begin{array}{lcl}
Q & = & \{w \in (2^{Ap})^n \mid \forall i. ~ w[i] \subseteq \cl{\history{\varphi}}[i]\} \\
\delta(\nu w, \nu') & = & w\nu' ~ \historyinter ~ \drop{\cl{\history{\varphi}}} \quad \text{for all} ~ \nu, \nu' \in 2^{Ap} ~ \text{and} ~ w \in (2^{Ap})^{n-1} \\
\alpha & = & \{(w, \nu, w') \in \delta \mid w\nu\emptyset^\omega \models \varphi\} \\
\end{array}\]
\end{definition}

Observe that we must take the closure of $\history{\varphi}$ before intersecting with the buffer. Otherwise we might lose information while propagating letters from the back to the front of the buffer. Further, we can always drop the last set of letters of the relevant history, since a transition-based acceptance is used. In the context of state-based acceptance this needs to be also stored in the buffer.

Let us apply this construction to our example: $\G \F (a_1 \wedge \X a_2)$.  First, we get $\history{a_1 \wedge \X a_2} = \{a_1\} \{a_2\}$. Second, since we always drop the last set of letters, we have $\drop{\history{a_1 \wedge \X a_2}} = \{a_1\}$ and $n = 1$. Thus we obtain the TELA automaton shown in \Cref{fig:fairness:example}, which is in fact a Büchi automaton.
 
\begin{theorem} 
Let $\varphi$ be an $LTL(\X)$ formula over $Ap$.
\[L(\G\F\varphi) = L(\automaton{\G\F \varphi})\]
\end{theorem}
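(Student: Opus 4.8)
The plan is to fix an arbitrary $\omega$-word $w$ and show $w \models \G\F\varphi$ iff the unique run $\rho = (q_0, w[0], q_1)(q_1, w[1], q_2)\cdots$ of $\automaton{\G\F\varphi}$ on $w$ (with $q_0 = \emptyset^n$) satisfies $\INF{\alpha}$; this is exactly the language equality. The heart of the argument is an invariant describing the buffer after reading a prefix: writing $w[m] := \emptyset$ for $m < 0$, I would prove by induction on $k$ that for every slot $j \in \{0,\dots,n-1\}$,
\[ q_k[j] \;=\; w[k-n+j] \cap \cl{\history{\varphi}}[j]. \]
The base case $k=0$ is immediate, since all referenced positions are negative and hence $\emptyset$. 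In the inductive step the transition rule shifts the buffer forward and re-masks with $\drop{\cl{\history{\varphi}}}$; the only non-routine slot is $j<n-1$, where the freshly computed entry is $q_k[j+1] \cap \cl{\history{\varphi}}[j]$, and I use the monotonicity $\cl{\history{\varphi}}[j] \subseteq \cl{\history{\varphi}}[j+1]$ of the forward closure to absorb the earlier mask. This is precisely where taking the closure (rather than $\history{\varphi}$ itself) is essential: it guarantees that a letter keeps every proposition it will still need as it migrates towards the front of the buffer.

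Next I would characterise the accepting transitions. By the acceptance definition, $\rho[k] \in \alpha$ iff $q_k \, w[k] \, \emptyset^\omega \models \varphi$; substituting the invariant, the word on the left is
\[ u_k := (w[k-n]\cap\cl{\history{\varphi}}[0]) \cdots (w[k-1]\cap\cl{\history{\varphi}}[n-1]) \, w[k] \, \emptyset^\omega. \]
I then claim $u_k \models \varphi$ iff $w_{k-n} \models \varphi$. To see this I compute $u_k \historyinter \history{\varphi}$ slotwise: since $\history{\varphi}[i] \subseteq \cl{\history{\varphi}}[i]$ for $i<n$ and $\history{\varphi}$ has length $n+1$, both the closure masks and the unmasked last letter collapse, giving $u_k \historyinter \history{\varphi} = w_{k-n} \historyinter \history{\varphi}$. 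Applying \Cref{lem:relevant-history} twice then yields $u_k \models \varphi \Leftrightarrow u_k\historyinter\history{\varphi}\models\varphi \Leftrightarrow w_{k-n}\historyinter\history{\varphi}\models\varphi \Leftrightarrow w_{k-n}\models\varphi$. Hence $\rho[k]\in\alpha$ iff $\varphi$ holds at position $k-n$ of $w$.

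Finally I would transfer this to $\INF{\alpha}$. Since $\automaton{\G\F\varphi}$ has finitely many transitions, some transition of $\alpha$ lies in $\inf(\rho)$ exactly when the index set $\{k : \rho[k]\in\alpha\}$ is infinite (pigeonhole in one direction, trivially in the other). By the previous step, for $k \ge n$ this set equals $\{k \ge n : w_{k-n}\models\varphi\}$, whose infinitude is equivalent, under the shift $m=k-n$, to $\{m \ge 0 : w_m \models\varphi\}$ being infinite, i.e. to $w \models \G\F\varphi$; the finitely many initial indices $k<n$ do not affect infinitude. Chaining the equivalences gives the claim, and the degenerate case $n=0$ (the single-state automaton, $\varphi\in LTL(\X)$ with no $\X$) is subsumed, the invariant being vacuous and $\rho[k]\in\alpha \Leftrightarrow w[k]\models\varphi$. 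The main obstacle is getting the buffer invariant and its closure-monotonicity bookkeeping exactly right; once that is in place, the rest follows from \Cref{lem:relevant-history} and a routine finiteness argument.
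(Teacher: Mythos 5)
Your proof is correct and takes essentially the same route as the paper's: both reduce acceptance of a transition to \Cref{lem:relevant-history} via the masked buffer contents, and then argue that infinitely many accepting transition occurrences force (by pigeonhole) an accepting transition in $\inf(\rho)$. Your write-up is in fact more rigorous than the paper's own proof, which silently assumes exactly your buffer invariant $q_k[j] = w[k-n+j] \cap \cl{\history{\varphi}}[j]$ (including the closure-monotonicity bookkeeping) when it asserts that the automaton \enquote{infinitely often takes the (shortened) transition} $t = (w'[0]\dots w'[n-1], w'[n])$ --- a transition whose source state and letter are only equal to the actual ones up to the masking argument you spell out.
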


\begin{proof}
Assume $w \models \G\F\varphi$ holds. Thus we have $\exists^\infty i. ~ w_i \models \varphi$ and we obtain $\exists^\infty i. ~ w_i ~ \historyinter ~ \history{\varphi} \models \varphi$ by using \Cref{lem:relevant-history}. Thus there exists a finite word $w' \in 2^{Ap}$ with (1) $w'\emptyset^\omega = w_i ~ \historyinter ~ \history{\varphi}$, (2) $w'\emptyset^\omega \models \varphi$, and (3) $|w'| = |\history{\varphi}|$. Thus $\automaton{\G\F \varphi}$ infinitely often takes the (shortened) transition $t = (w'[0]\dots w'[n - 1], w'[n])$. Due to (2) we have $t \in \alpha$ and thus $w \in L(\automaton{\G\F\varphi})$. The other direction is analogous.
\end{proof}

Since $\F \G$ is equivalent to $\neg \G \F \neg \varphi$, we immediately obtain also a translation for $LTL_{\F\G}(\X)$. We only need to change the acceptance condition to $\FIN{\alpha}$ with $\alpha = \{(w, \nu, w') \in \delta \mid w\nu\emptyset^\omega \not \models \varphi\}$.

\begin{figure}
	\centering
	\begin{subfigure}[b]{0.45\textwidth}
		\centering
		\begin{tikzpicture}[x=2cm,y=2cm,font=\footnotesize,initial text=]
		\node[state,initial] (1) at (0,0)      {$\{\}$};
		\node[state]          (2) at (0,-1)     {$\{a_1\}$};
		\path[->] (1) edge node[right]{$a_1$} (2)
                   		   edge [loop right] node[right]{$\overline{a_1}$} (1)
              		      (2) edge [bend left=-45] node[right] {$\overline{a_1}  \overline{a_2}$} (1)
                   		   edge [bend left=45, double] node[left] {$\overline{a_1} a_2$} (1)
                  	           edge [loop left, double] node[left]{$a_1 a_2$} (2)
                   		   edge [loop right] node[right]{$a_1 \overline{a_2}$} (2);
		\end{tikzpicture}
		\caption{$\psi_1 = \G\F (a_1 \wedge \X a_2)$}
		\label{fig:fairness:example}
	\end{subfigure}
	\begin{subfigure}[b]{0.45\textwidth}
	\centering
	\begin{tikzpicture}[x=2cm,y=2cm,font=\footnotesize,initial text=]
\node[state,initial] (1) at (0,0)      {$\F (b_1 \wedge \F b_2)$};
\node[state]          (2) at (1,0)     {$\F b_2$};
\node[state] (3) at (2,0)     {$\true$};

\path[->] (1) edge node[below]{$b_1 \overline{b_2}$} (2)
 	           edge[bend left=-45] node[below]{$b_1 b_2$} (3)
                   edge [loop above] node[above]{$\overline{b_1}$} (1)
              (2) edge node[below] {$b_2$} (3)
                    edge [loop above] node[above]{$\overline{b_2}$} (2)
              (3) edge [loop above, double] node[above]{$\true$} (3);
\end{tikzpicture}
\caption{$\psi_2 = \F (b_1 \wedge \F b_2)$}
\label{fig:cosafety:example}
	\end{subfigure}
\caption{Automata for $\psi_1$ and $\psi_2$. Bold edges denote accepting transitions.}
\end{figure}
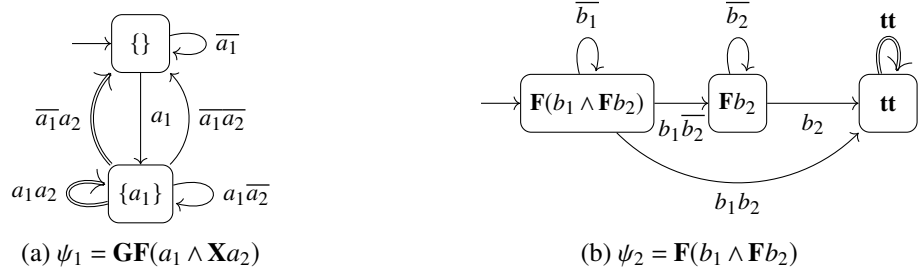

\subsection{Safety- and Cosafety-LTL}
\label{sec:safety}

Translating safety LTL to deterministic automata is a well-studied problem. Since these languages can be defined using bad prefixes, meaning once a bad prefix has been read, the word is rejected, most automata generated by most available translations will have a single rejecting sink. All other states and transitions are then either rejecting or accepting. We use the straight-forward approach to apply the $\aft$-function from \cite{EsparzaKS16} to obtain a deterministic automaton for cosafety LTL formulas and by duality also for automata for safety languages. The $\aft$-function computes the left-derivative of a language expressed as an LTL formula.  

\begin{definition}[\cite{EsparzaKS16}, Definition 7]
Let $\varphi$ be a formula of $LTL(\U, \X)$, then \[\automaton{\varphi} = (Q, 2^{Ap}, \delta, [\varphi]_P, \{[\true]_P\}).\]
\end{definition}

\begin{theorem}[\cite{EsparzaKS16}, Theorem 2]
Let $\varphi$ be a formula of $LTL(\U, \X)$, then \[L(\varphi) = L(\automaton{\varphi}).\]
\end{theorem}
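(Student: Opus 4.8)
The plan is to show that $\automaton{\varphi}$ is the residual (left-derivative) automaton of $\varphi$, and that for a cosafety formula acceptance coincides with the residual becoming $\true$. I would rely throughout on the single characteristic property of the $\aft$-function,
\[
\nu w \models \varphi \quad\Longleftrightarrow\quad w \models \aft(\varphi, \nu)
\qquad\text{for all } \nu \in 2^{Ap},\ w \in (2^{Ap})^\omega .
\]
First I would establish this by structural induction on $\varphi \in LTL(\U, \X)$: the literal and Boolean cases are immediate from the semantics; the case $\X \psi$ uses $\aft(\X\psi,\nu)=\psi$ together with the fact that the suffix of $\nu w$ after the first letter is $w$; and the case $\psi \U \chi$ uses the fixed-point unfolding $\psi \U \chi \equiv \chi \vee (\psi \wedge \X(\psi \U \chi))$, which is exactly what the recursive definition of $\aft$ reflects. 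I would then lift the property to finite prefixes $u = \nu_0 \cdots \nu_{k-1}$, writing $\aft(\varphi, u)$ for the iterated application, so that $u w \models \varphi \Leftrightarrow w \models \aft(\varphi, u)$.

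Next I would record that the reachable states form a finite set: every $\aft$-successor is a Boolean combination of subformulas of $\varphi$, so only finitely many classes $[\cdot]_P$ arise and $Q$ is well defined. The run of $\automaton{\varphi}$ on $w$ is then exactly the sequence $[\varphi]_P,\ [\aft(\varphi, w[0])]_P,\ [\aft(\varphi, w[0]w[1])]_P, \ldots$, and I would observe that $[\true]_P$ is absorbing, since $\aft(\true, \nu) = \true$. Hence the run visits the accepting state $[\true]_P$ infinitely often if and only if it reaches it at some finite stage $k$, i.e.\ $\aft(\varphi, w[0]\cdots w[k-1]) \equiv \true$.

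The heart of the argument is the cosafety good-prefix characterisation,
\[
w \models \varphi \quad\Longleftrightarrow\quad \exists k.\ \aft(\varphi, w[0]\cdots w[k-1]) \equiv \true .
\]
The direction ``$\Leftarrow$'' is immediate from the lifted characteristic property, since if the residual is propositionally $\true$ then $w \models \varphi$ regardless of the suffix. The direction ``$\Rightarrow$'' is the main obstacle: I would prove, by induction on the structure of $\varphi \in LTL(\U, \X)$, that every model of a cosafety formula already possesses a good prefix whose residual is propositionally $\true$. The delicate case is $\psi \U \chi$, where satisfaction pins the witnessing position of $\chi$ to some finite index; there one combines the induction hypotheses for $\psi$ and $\chi$ and chooses $k$ large enough to discharge all finitely many obligations, using that $\X$ only shifts the relevant window by a bounded amount. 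Chaining the two characterisations finally gives $w \in L(\automaton{\varphi})$ iff the run reaches $[\true]_P$ iff $w \models \varphi$, which is the claim.
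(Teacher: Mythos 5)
Your proposal is correct, and it is essentially the proof of the cited result: the paper does not prove this theorem itself but imports it from \cite{EsparzaKS16} (Theorem 2), where the argument rests on exactly the two pillars you identify, namely the characteristic property $\nu w \models \varphi \Leftrightarrow w \models \aft(\varphi, \nu)$ of the left-derivative and the cosafety good-prefix lemma that $w \models \varphi$ iff some finite prefix drives the residual to a propositional tautology, combined with the observation that $[\true]_P$ is an absorbing accepting sink. Your treatment of the delicate $\psi \U \chi$ case (choosing $k$ beyond all finitely many obligations and using that propositionally true residuals stay true under further derivation) is the same induction carried out there, so nothing diverges in substance.
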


For the cosafety formula $\F (b_1 \wedge \F b_2)$ we then obtain the automaton of \Cref{fig:cosafety:example} with the accepting sink $[\true]_P$. This approach also immediately tells us, when a run is accepting by looking at the state.

\subsection{General LTL}
\label{sec:general-ltl}

If the translation encounters a subformula not covered by \Cref{sec:fairness-ltl} and \Cref{sec:safety}, it resorts to an external general purpose LTL to deterministic automaton translation. Here no restrictions on the type of the automaton are made, since all of them --- Rabin, Streett, Parity, Büchi --- can be interpreted as a TELA. 

\subsection{Product construction}

\paragraph{Standard Construction.} All these deterministic automata are then combined using a product construction. We first introduce the standard product construction for Emerson-Lei Automata that is similar to the product construction for Muller automata and then move on to the enhanced construction.

\begin{definition}\label{def:naive-product}
    Let $\varphi$ be a formula and for every $\psi \in \sff{\varphi}$ let $\automaton{\psi}$ be a deterministic TELA recognising $L(\psi)$. The deterministic TELA for the product automaton is defined as:
\[\prodautomaton{\varphi} = (Q, 2^{Ap}, \delta, q_0, \alpha(\varphi))\]
\[\delta(s, \nu) = \{\psi \mapsto \delta^{\automaton{\psi}}(s[\psi], \nu) \mid \psi \in \sff{\varphi}\} 
\qquad q_0 = \{\psi \mapsto q_0^{\automaton{\psi}} \mid \psi \in \sff{\varphi}\}\]

We denote by $s[\psi] = q$ the current state of the automaton $\automaton{\psi}$ in the product state $s$, meaning $\psi \mapsto q \in s$. Since all $\delta^\automaton{\psi}$ are deterministic, $\delta$ is also deterministic. We denote by $Q^{\automaton{\psi}}$ the states of $\automaton{\psi}$ and by $q_0^{\automaton{\psi}}$ the initial state of $\automaton{\psi}$. Further $Q$ is defined as the set of all from the initial state reachable states. The acceptance condition is recursively computed over the structure of $\varphi$ with $\uparrow$ denoting the lifting of the acceptance condition:
\[\begin{array}{rclcrcl}
\alpha(\true) & = & \true & \qquad & \alpha(\varphi \wedge \psi) & = & \alpha(\varphi) \wedge \alpha(\psi) \\
\alpha(\false) & = & \false & & \alpha(\varphi \vee \psi) & = & \alpha(\varphi) \vee \alpha(\psi) \\
\end{array} \qquad \alpha(\psi) ~~ = ~~ \lift{\alpha^{\automaton{\psi}}} \]
\end{definition}

\begin{theorem} Let $\varphi$ be an LTL formula. Then
\[L(\varphi) = L(\prodautomaton{\varphi})\]
\end{theorem}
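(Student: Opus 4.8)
The plan is to prove the two language inclusions separately, reducing each to the correctness of the individual component automata and the definition of the lifted acceptance condition $\alpha(\varphi)$. Since the transition structure of $\prodautomaton{\varphi}$ is literally the synchronous product of the component transition functions, the key observation is that a single run of the product automaton on a word $w$ projects componentwise onto the unique runs of each $\automaton{\psi}$ on $w$. First I would make this precise: writing $\rho$ for the run of $\prodautomaton{\varphi}$ on $w$ and $\rho_\psi$ for the run of $\automaton{\psi}$ on $w$, I would show by induction on the position index that $s_i[\psi] = q_i^{\automaton{\psi}}$, i.e. the $\psi$-component of the $i$-th product state equals the $i$-th state of $\rho_\psi$. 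This follows immediately from the definitions of $q_0$ and $\delta$, which update each component independently using its own $\delta^{\automaton{\psi}}$.

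Next I would relate the accepting condition of the product run to those of the components. The lifting $\lift{\alpha^{\automaton{\psi}}}$ must be read as replacing each atom $\INF{P}$ or $\FIN{P}$ over transitions of $\automaton{\psi}$ by the corresponding atom over the set of product transitions whose $\psi$-projection lies in $P$. The crucial bridge is that, because the projection is a function on individual transitions, $\inf(\rho)$ projects exactly onto $\inf(\rho_\psi)$; hence $\inf(\rho_\psi) \cap P \neq \emptyset$ if and only if $\inf(\rho)$ meets the lifted set. This gives that $\inf(\rho)$ satisfies $\lift{\alpha^{\automaton{\psi}}}$ iff $\inf(\rho_\psi)$ satisfies $\alpha^{\automaton{\psi}}$, so $\rho$ is accepting for the atomic acceptance condition $\alpha(\psi)$ exactly when $\rho_\psi$ is accepting in $\automaton{\psi}$, which by the correctness theorems for the component constructions means $w \in L(\psi)$.

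With these two facts in hand I would finish by a structural induction on the Boolean structure of $\varphi$ over its subformulas in $\sff{\varphi}$. The base cases $\true$, $\false$, and the atomic case $\psi \in \sff{\varphi}$ are handled by the previous paragraph together with $\alpha(\true) = \true$, $\alpha(\false) = \false$. For the inductive step, since $\alpha(\varphi_1 \wedge \varphi_2) = \alpha(\varphi_1) \wedge \alpha(\varphi_2)$ and dually for disjunction, and since the satisfaction of a Boolean combination of acceptance formulas by $\inf(\rho)$ distributes over $\wedge$ and $\vee$, the run $\rho$ satisfies $\alpha(\varphi)$ iff the appropriate Boolean combination of the component acceptance statements holds, which by the induction hypothesis is exactly $w \models \varphi$. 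Chaining the equivalences yields $w \in L(\prodautomaton{\varphi})$ iff $w \models \varphi$, i.e. $L(\varphi) = L(\prodautomaton{\varphi})$.

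The step I expect to be the main obstacle is pinning down the semantics of the lifting operator $\lift{\cdot}$ precisely enough that the projection identity $\inf(\rho) \text{ satisfies } \lift{\alpha^{\automaton{\psi}}} \iff \inf(\rho_\psi) \text{ satisfies } \alpha^{\automaton{\psi}}$ is rigorous. The subtlety is purely set-theoretic rather than temporal: distinct product transitions may project to the same component transition, and one must verify that taking preimages under projection commutes with the Boolean structure of the acceptance formula and with the infinitely-often operator. This commutation is what makes the whole argument decompose cleanly along $\sff{\varphi}$; everything else is a routine induction.
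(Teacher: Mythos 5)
There is nothing in the paper to compare you against: the authors state this theorem without proof, treating it as a standard product-construction fact. Your proposal fills that gap correctly, and with what is presumably the intended argument: componentwise projection of the unique product run onto the component runs, a reading of $\lift{\cdot}$ as taking preimages of acceptance sets under the projection map, and a structural induction over the Boolean skeleton of $\varphi$ with the formulas of $\sff{\varphi}$ as atoms. The step you flag as the main obstacle is indeed the only place where more than unfolding definitions is needed, and it goes through by finiteness: the inclusion $\mathrm{proj}_\psi(\inf(\rho)) \subseteq \inf(\rho_\psi)$ is immediate because projection is a function, while the converse uses a pigeonhole argument --- if a component transition occurs infinitely often in $\rho_\psi$ then, since only finitely many product transitions project to it (both $Q$ and $2^{Ap}$ are finite), at least one of them occurs infinitely often in $\rho$. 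Once $\inf(\rho_\psi) = \mathrm{proj}_\psi(\inf(\rho))$ is established, the equivalence for the atoms $\INF{P}$ and $\FIN{P}$ is just the image/preimage adjunction, and satisfaction of Boolean combinations commutes as you say. One caveat, which you inherit from the paper rather than introduce: the recursion defining $\alpha$ has no clause for a bare (negated) atomic proposition outside every temporal operator (e.g.\ $\varphi = a \wedge \F b$), and no condition on $\inf(\rho)$ can capture such a first-letter constraint, since two words differing only in their first letter can induce runs with identical infinity sets. So the theorem implicitly assumes $\varphi$ is a Boolean combination of $\true$, $\false$, and formulas of $\sff{\varphi}$ (in practice, bare propositions would be wrapped as trivial (co)safety components), and your induction silently makes the same assumption.
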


\paragraph{Enhanced Construction.} An essential part of the enhanced product construction is the removal of unnecessary information from the product states. For this we introduce three additional states with special semantics: $\qacc$ signalises that the component moved to an accepting trap, while $\qrej$ expresses that the component moved to a rejecting trap. Alternatively, if a component got irrelevant for the acceptance condition it is also moved to $\qrej$. Lastly, $\qhold$ says that the component was put on hold. More specifically, we put the fairness automata on hold, if a \enquote{neighbouring} automaton still needs to fulfil its goal, such as reaching an accepting trap. To make notation easier to read we assume that every automaton $\automaton{\varphi}$ contains these states and all accepting sinks (or traps) have been replaced by $\qacc$ and rejecting by $\qrej$.

\newcommand{\update}{\ensuremath{\mathsf{update}}}
\newcommand{\run}[1]{\ensuremath{\mathsf{run}\left(#1\right)}}
\newcommand{\prune}[1]{\ensuremath{\mathsf{prune}(#1)}}

\noindent In the following we use the following abbreviations to reason about LTL formulas:

\begin{itemize}
\item $\conj{\varphi}$ ($\disj{\varphi}$) denotes the set of all conjuncts of a conjunction (disjuncts of a disjunction) outside the scope of a temporal operator, e.g. let $\varphi = \F a \wedge (\X b \vee \G c)$, then $\conj{\varphi} = \{\{\F a, \X b \vee \G c\}\}$ and $\disj{\varphi} = \{\{\X b, \G c\}\}$.  
\item $\varphi[\Psi/\psi]$ denotes the substitution of all formulas in the set $\Psi$ with the formula $\psi$, e.g. $(\F a \wedge (\X b \vee \G c))[\{\F a, \G a\}/ \true] = \true \wedge (\X b \vee \G c)$.
\item $\support{\varphi}$ denotes the support of a formula, where the formula is viewed as a propositional formula, which means that temporal operators are also considered propositions, e.g. $\support{(\X a \wedge \F b) \vee (\F b)} = \{\F b\}$. This means every assignment can be restricted to the propositions of the support: $S \models_P \varphi \leftrightarrow S \cap \support{\varphi} \models_P \varphi$, where $\models_P$ denotes the conventional propositional satisfaction relation.
\end{itemize}

We use the following definitions to manipulate product states:

\begin{definition}[Product State Modifications] 
An $\mathsf{update}$ of a product state tests a predicate $P$ on a formula-state pair $(\psi, q)$ and replaces $q$ with a new value obtained by the updater $U$ depending on $\psi$, if it holds:
\[\update(s, P, U) = \{\psi \mapsto (\textbf{if } P(\psi, q) \textbf{ then } U(\psi) \textbf{ else } q) \mid \psi \mapsto q \in s \}\]

\noindent $\prune{s}$ disables automata in $s$ that became irrelevant for the acceptance condition, meaning there are no longer in the support of the original formula after using knowledge from other automata. For this let us denote by $\Psi_{\mathsf{acc}}$ all $\psi \mapsto \qacc \in s$ and by $\Psi_{\mathsf{rej}}$ all $\psi \mapsto \qrej \in s$.

\[\begin{array}{ll} \prune{s} & = \update(s, P, U) \\ 
P(\psi, q) & = (q \neq \qacc \wedge \psi \not \in \support{\varphi[\Psi_{\mathsf{acc}}/\true, \Psi_{\mathsf{rej}}/\false]}) \\
U(\psi) & = \qrej
\end{array}\]

\noindent $\run{s}$ starts (fairness) automata that are required for the acceptance but have been put on hold. This is the case, if automata with terminal acceptance for formulas in the same conjunction ($\run{s}_c$) have not yet reached $\qacc$ or the dual case for disjunctions:
\[
\begin{array}{ll}
\run{s}_c & = \update(s, P_c, U) \\ 
\run{s}_d & = \update(s, P_d, U) \\ 
P_c(\psi, q) & = (q = \qhold \wedge \exists C \in \conj{\varphi}. ~ \psi \in C \wedge \forall \chi \in C \cap LTL(\U,\X). ~ s[\chi] = \qacc) \\
P_d(\psi, q) & = (q = \qhold \wedge \exists D \in \disj{\varphi}. ~ \psi \in D \wedge \forall \chi \in D \cap LTL(\R,\X). ~ s[\chi] = \qrej) \\
U(\psi) & = q_0^{\automaton{\psi}}
\end{array}\]
\end{definition}

\begin{definition}[Enhanced Product Automaton]
Let $\varphi$ be a formula. The TELA for the enhanced product automaton is defined the same way as  \Cref{def:naive-product} with the following changes:

\[\mathcal{A}^\times_E(\varphi) = (Q, 2^{Ap}, \delta, q_0, \alpha(\varphi))\]
\[\begin{array}{lcl}
\delta(s, \nu) & = & \run{\prune{\{ \psi \mapsto \delta^{\automaton{\psi}}(s[\psi], \nu) \mid \psi \in \sff{\varphi}\}}} \\[1em]
q_0 & = & \run{\left\{\psi \mapsto \begin{cases}	 q_0^{\automaton{\psi}} & \text{if} ~~ \psi \in \sff{\varphi} \setminus LTL_{\F\G, \G\F}(\X)\\ \qhold & \text{otherwise} \end{cases}\right\}} \\
\end{array}\]

\end{definition}

\begin{theorem} Let $\varphi$ be a formula.
\[L(\varphi) = L(\mathcal{A}^\times_E(\varphi))\]
\end{theorem}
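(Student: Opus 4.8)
The plan is to show that the enhanced product automaton $\mathcal{A}^\times_E(\varphi)$ accepts exactly the same language as the standard product automaton $\prodautomaton{\varphi}$, whose correctness was already established in the previous theorem. Since $L(\varphi) = L(\prodautomaton{\varphi})$ holds, it suffices to prove that the three modifications introduced in the enhanced construction --- replacing accepting/rejecting sinks by $\qacc$/$\qrej$, the $\prune{\cdot}$ operation, and the $\qhold$/$\run{\cdot}$ mechanism --- do not change the accepted language. First I would fix a word $w$ and compare the run $\rho$ of $\prodautomaton{\varphi}$ with the run $\rho_E$ of $\mathcal{A}^\times_E(\varphi)$ on $w$, and prove by induction on positions a correspondence invariant: for every component $\psi$, the enhanced run's state $\rho_E[i][\psi]$ either equals the standard state $\rho[i][\psi]$, or is one of $\qacc$, $\qrej$, $\qhold$ in a situation where this discrepancy is provably irrelevant to acceptance.

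The core of the argument breaks into three soundness claims, which I would treat separately. For the sink-collapsing step, I would argue that once a safety/cosafety component reaches an accepting (resp.\ rejecting) trap, its lifted acceptance contribution $\lift{\alpha^{\automaton{\psi}}}$ is constantly true (resp.\ false) on every continuation, so replacing the trap by $\qacc$/$\qrej$ and hard-coding the corresponding truth value leaves the value of $\alpha(\varphi)$ unchanged on $\inf(\rho)$. For the $\prune{\cdot}$ step, the key observation is the support property stated in the preliminaries, namely $S \models_P \chi \leftrightarrow S \cap \support{\chi} \models_P \chi$: after substituting the already-decided $\qacc$ components by $\true$ and $\qrej$ components by $\false$ in $\varphi$, any $\psi$ outside $\support{\varphi[\Psi_{\mathsf{acc}}/\true,\Psi_{\mathsf{rej}}/\false]}$ can no longer influence the Boolean value of $\alpha(\varphi)$, so forcing it to $\qrej$ is harmless; here I must check that this substitution correctly mirrors, at the acceptance-formula level, the commitments already made by the decided components. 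For the $\qhold$/$\run{\cdot}$ step, I would show that a fairness automaton that is suspended never contributes transitions to $\inf(\rho_E)$ during the suspension, and that $\run{\cdot}$ releases it precisely when all the neighbouring terminal ($LTL(\U,\X)$ in a conjunction, $LTL(\R,\X)$ in a disjunction) components have settled into $\qacc$ (resp.\ $\qrej$); I then argue that for the purpose of the Muller condition only the infinite tail of the run matters, and since a held component is released after finitely many steps whenever it is genuinely needed, its infinitely-often behaviour coincides with that of the unsuspended automaton.

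Having established the three local soundness claims, I would assemble them: since $\prune{\cdot}$ and $\run{\cdot}$ are applied at every transition, I need the invariant to be preserved compositionally, so I would verify that the operations commute suitably with taking successors (i.e.\ that pruning or holding a component and then stepping agrees, modulo the irrelevance established above, with stepping and then pruning/holding). The final step is to read off $\inf(\rho_E)$ and show it satisfies $\alpha(\varphi)$ iff $\inf(\rho)$ does, using that the components redirected to $\qacc$, $\qrej$, or $\qhold$ are exactly those whose genuine infinite behaviour has been shown irrelevant or faithfully reproduced after finite delay.

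I expect the main obstacle to be the interaction between $\prune{\cdot}$ and $\run{\cdot}$, and in particular justifying the suspension mechanism for fairness components. The subtle point is that a fairness formula such as $\G\F\psi$ must be checked on the true suffix of $w$, yet the enhanced automaton only ever starts checking it from the moment it is released; I must argue that releasing after a finite prefix loses no information, which relies on the shift-invariance of $\G\F$ and $\F\G$ languages (they are tail properties) together with the fact that the conditions $P_c, P_d$ guarantee release happens before the neighbour's obligation is permanently discharged. Pinning down that the dependency structure between components --- who waits for whom --- is acyclic and terminates, so that no component is held forever when it is actually required for acceptance, is where the real care is needed.
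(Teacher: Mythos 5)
First, a point of reference: the paper states this theorem \emph{without any proof} (just as it does for the preceding standard-product theorem), so there is no official argument to measure you against; what follows assesses your plan on its own merits.

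Your outline is sound and identifies exactly the right obligations: reduction to the standard product $\prodautomaton{\varphi}$, irrelevance-via-support for $\prune{\cdot}$, and prefix-independence of $LTL_{\F\G,\G\F}(\X)$ languages for the $\qhold$/$\run{s}$ mechanism. Your worry about cyclic waiting, incidentally, is easily dispatched: only fairness components are ever put on hold, and the predicates $P_c, P_d$ make them wait only on (co)safety components, which are never held, so the dependency relation is trivially acyclic. Two points, however, need repair before the sketch becomes a proof. (i) Your correspondence invariant is stated too strongly: a fairness component released at time $i$ restarts from the empty buffer $\emptyset^n$, so during the next $n$ steps its state is neither equal to the standard component's state nor one of $\qacc$, $\qrej$, $\qhold$; you must either weaken the invariant to equality after a bounded delay (the buffer is a sliding window over the last $n$ letters, so the two states coincide once $n$ letters have been read past release), or drop state comparison for fairness components altogether and instead invoke the fairness-fragment correctness theorem on the suffix $w_i$. (ii) For pruning, the values substituted in $\varphi[\Psi_{\mathsf{acc}}/\true,\Psi_{\mathsf{rej}}/\false]$ need not match the actual acceptance contributions of the decided components: a component for an $\F\G$-formula that was itself pruned to $\qrej$ lands in $\Psi_{\mathsf{rej}}$ and is substituted by $\false$, yet its lifted condition $\FIN{\cdot}$ is vacuously \emph{true} on the $\qrej$-loop. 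Hence the inductive invariant cannot be ``substituted value equals contributed value''; it must be that every member of $\Psi_{\mathsf{rej}}$ is either genuinely false on $w$ or was outside the support at the moment it was pruned --- and since a proposition outside the support can be replaced by either truth value without changing the residual formula's semantics, such components remain irrelevant forever, which is what makes the cascade of pruning decisions sound. With these two adjustments your approach goes through.
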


\noindent If we apply this construction to $\G \F (a_1 \wedge \X a_2) \wedge \F (b_1 \wedge \F b_2)$ we obtain the automaton shown in \Cref{fig:suspended:example}. Observe that $\psi_2$ is put on hold until the automaton for $\psi_1$ reaches $\qacc$.
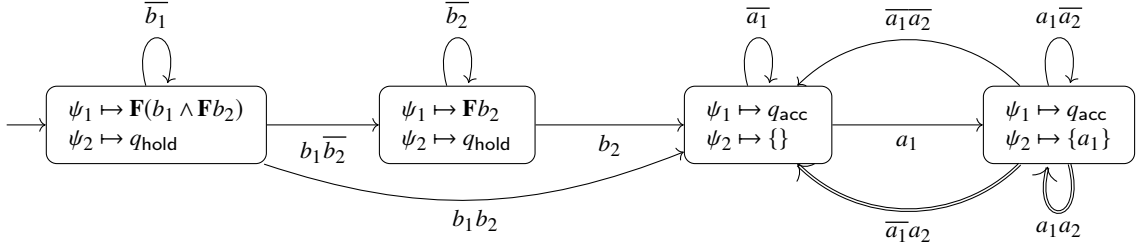
\begin{figure}
\begin{center}
\begin{tikzpicture}[x=2cm,y=2cm,font=\footnotesize,initial text=]
\node[state,initial] (1) at (0,0)      {$\begin{array}{l} \psi_1 \mapsto \F (b_1 \wedge \F b_2) \\ \psi_2 \mapsto \qhold \end{array}$};
\node[state]          (2) at (2,0)   {$\begin{array}{l} \psi_1 \mapsto \F b_2 \\ \psi_2 \mapsto \qhold \end{array}$};
\node[state]          (3) at (4,0)     {$\begin{array}{l} \psi_1 \mapsto \qacc \\ \psi_2 \mapsto \{\} \end{array}$};
\node[state]          (4) at (6,0)   {$\begin{array}{l} \psi_1 \mapsto \qacc \\ \psi_2 \mapsto \{a_1\} \end{array}$};

\path[->] (1) edge node[below]{$b_1 \overline{b_2}$} (2)
 	           edge[bend left=-20] node[below]{$b_1 b_2$} (3)
                   edge [loop above] node[above]{$\overline{b_1}$} (1)
              (2) edge node[below] {$b_2$} (3)
                    edge [loop above] node[above]{$\overline{b_2}$} (2);
                    
\path[->] (3) edge node[below]{$a_1$} (4)
                   edge [loop above] node[above]{$\overline{a_1}$} (3)
              (4) edge [bend left=-45] node[above] {$\overline{a_1}  \overline{a_2}$} (3)
                    edge [bend left=45, double] node[below] {$\overline{a_1} a_2$} (3)
                   edge [loop below, double] node[below]{$a_1 a_2$} (4)
                   edge [loop above] node[above]{$a_1 \overline{a_2}$} (4);
\end{tikzpicture}
\caption{Enhanced Product Automaton for $\G \F (a_1 \wedge \X a_2) \wedge \F (b_1 \wedge \F b_2)$, only the accepting edges for $\psi_2$ are drawn.}
\label{fig:suspended:example}
\end{center}
\end{figure}

\subsubsection{Further Optimisations}

There are two further optimisations we implement: First, we replace the local histories of each automaton for $LTL_{\F\G,\G\F}(\X)$ with one \textit{global} history. Second, we \textit{piggyback} the acceptance of (co-)safety automata on neighbouring fairness automata. Let $C \in \conj{\varphi}$ be a conjunction, $\psi_r \in LTL(\U, \X) \cap C$ and $\psi_f \in LTL_{\F\G}(\X)\cap C$. We then have $\alpha^{\automaton{\psi_f}} = \FIN{S}$ and extend $S$ with $Q^{\automaton{\psi_r}} \setminus \{\qacc\}$. The same trick can be applied to $\psi_f \in LTL_{\G\F}(\X)$ and of course to the dual case with $\psi_s \in LTL(\R, \X)$.

\section{Succinctness}
\label{succinctness}

It is clear from \Cref{def:naive-product} that the presented translation uses at most $|\sff{\varphi}|$ acceptance sets for Boolean combinations of $LTL_{\F\G,\G\F}(\X)$. We show succinctness compared to deterministic generalized Rabin automata or deterministic Streett automata, which might need an exponential sized acceptance condition for the same language, while the acceptance size only grows linearly for TELAs.

For this, we define two mutually recursive formula patterns modelling Rabin and Streett conditions:
\begin{align*}
    \varphi_{\mathrm{R}, 0} &= \F \G a_0 \wedge \G \F b_0 \qquad \varphi_{\mathrm{R}, n+1} = (\F \G a_{n+1} \wedge \G \F b_{n+1}) \vee \varphi_{\mathrm{S}, n}\\
    \varphi_{\mathrm{S}, 0} &= \F \G a_0 \vee \G \F b_0 \qquad \varphi_{\mathrm{S}, n+1} = (\F \G a_{n+1} \vee \G \F b_{n+1}) \wedge \varphi_{\mathrm{R}, n}
\end{align*}

We call the subformulas $\F \G a_j$, $\G \F b_j$ leafs, and a set $\mathcal{L} \subseteq \sff{\varphi}$
of leafs a good leaf set --- denoted by $\gls(\varphi)$ ---, if it is a minimal set satisfying
$\varphi_{\mathrm{R}, n}$, respectively $\varphi_{\mathrm{S}, n}$. 

\begin{lemma}
    \label{good-leaf-sets}
    For $\varphi_{\mathrm{R}, n}$ and $\varphi_{\mathrm{S}, n}$
    there are $\varOmega(2^{\frac{n}{2}})$ good leaf sets.
\end{lemma}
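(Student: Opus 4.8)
The plan is to analyze the recursive structure of $\varphi_{\mathrm{R},n}$ and $\varphi_{\mathrm{S},n}$ and count good leaf sets by setting up a recurrence. First I would unfold the mutual recursion: each formula at level $k$ introduces a fresh pair of leafs $\F\G a_k$ and $\G\F b_k$ combined with a local Boolean connective ($\wedge$ for Rabin, $\vee$ for Streett), attached via the dual connective to the next-lower formula $\varphi_{\mathrm{S},k-1}$ or $\varphi_{\mathrm{R},k-1}$. Since a good leaf set is a \emph{minimal} satisfying set, I would first characterize combinatorially what minimality means for these alternating conjunction/disjunction trees: at a disjunction one must pick (minimally) exactly one satisfied disjunct, whereas at a conjunction one must satisfy all conjuncts, and the fresh two-leaf gadget $(\F\G a \wedge \G\F b)$ or $(\F\G a \vee \G\F b)$ contributes either both leafs, or a choice of one leaf, respectively.

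Let me denote by $R_n$ and $S_n$ the number of good leaf sets of $\varphi_{\mathrm{R},n}$ and $\varphi_{\mathrm{S},n}$. The key step is to read off a recurrence from the definitions. For $\varphi_{\mathrm{R},n+1} = (\F\G a_{n+1} \wedge \G\F b_{n+1}) \vee \varphi_{\mathrm{S},n}$, a minimal satisfying leaf set either takes both top leafs (one way, since the gadget is a conjunction and contributes nothing to the subformula) or is a good leaf set of $\varphi_{\mathrm{S},n}$; minimality across a disjunction means these alternatives are incomparable, so I expect $R_{n+1} = 1 + S_n$. Dually, for $\varphi_{\mathrm{S},n+1} = (\F\G a_{n+1} \vee \G\F b_{n+1}) \wedge \varphi_{\mathrm{R},n}$, a minimal set must satisfy both conjuncts, contributing one leaf from the top disjunctive gadget (two choices) times a good leaf set of $\varphi_{\mathrm{R},n}$, giving $S_{n+1} = 2 \cdot R_n$. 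With base cases $R_0 = 1$ (the conjunction forces both leafs) and $S_0 = 2$ (the disjunction offers two minimal choices), composing the two recurrences yields $R_{n+2} = 1 + 2 R_n$ and $S_{n+2} = 2 S_n$, both of which grow like $\varOmega(2^{n/2})$.

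The step I expect to be the main obstacle is justifying the minimality bookkeeping rigorously, specifically that no cross-level cancellation or subsumption occurs. The subtle point is that a leaf set minimal for the whole formula must restrict to a minimal set on the relevant subformula, and I must rule out the possibility that choosing the top gadget could be strictly subsumed by, or strictly subsume, a choice deeper in the recursion. Because each level uses fresh propositions $a_k, b_k$, the leafs at different levels are logically independent, which should prevent any such interference; I would make this precise by induction, showing the minimal satisfying sets decompose cleanly along the syntax tree. Once the recurrences $R_{n+1} = 1 + S_n$ and $S_{n+1} = 2 R_n$ are established, the claimed $\varOmega(2^{n/2})$ lower bound follows immediately by an easy induction, so I would keep that final estimate brief.
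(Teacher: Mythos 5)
Your proposal is correct and follows essentially the same route as the paper: the paper's proof establishes exactly your two recurrences --- each Streett level doubles the good leaf sets of the Rabin formula below it ($S_{k+1} = 2R_k$), while each Rabin level adds the single set $\{\F\G a_{k+1}, \G\F b_{k+1}\}$ to those of the Streett formula below ($R_{k+1} = 1 + S_k$) --- and concludes $\varTheta(2^{n/2})$ growth from the alternation. One minor arithmetic slip: composing the recurrences gives $S_{n+2} = 2 + 2S_n$ rather than $2S_n$, but this only strengthens the lower bound, so your conclusion stands.
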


\begin{proof}
    First note, that for each subformula of the form $\varphi_{\mathrm{S}, k+1}$
    there is a doubling of good leaf sets in $\varphi_{\mathrm{R}, k}$.
    This comes from the conjunction of the Streett pair $\F \G a_{k+1} \vee \G \F
    b_k$ and
    $\varphi_{\mathrm{R}, k}$: To every good leaf set of
    $\varphi_{\mathrm{R}, k}$ one has to add either $\F \G a_{k+1}$ or $\G \F
    b_{k+1}$ to obtain a good leaf set for $\varphi_{\mathrm{S}, k+1}$. On the
    other side, $\gls(\varphi_{\mathrm{R}, k+1}) = \left\lbrace \F \G a_{k+1},
    \G \F b_{k+1} \right\rbrace \cup \gls(\varphi_{\mathrm{S},n})$ 

    Since we alternate between $\varphi_{\mathrm{R}, k}$ and
    $\varphi_{\mathrm{S}, k}$, we have $\varTheta(2^{\frac{n}{2}})$ good
    leaf sets for $\varphi_{\mathrm{R}, n}$ (resp.
    $\varphi_{\mathrm{S}, n}$).
\end{proof}

W.l.o.g. we assume, that every good leaf set contains at most one subformula of
the form $\F \G \psi$. If there are more than one subformulas of this pattern,
e.g. $\F \G \psi_1$ and $\F \G \psi_2$, one can remove both and add $\F \G
(\psi_1 \wedge \psi_2)$. Note that this transformation does not reduce the
number of good leaf sets, since no good leaf set is removed, and two good leaf
sets cannot be reduced to the same good leaf set.

One can easily give a bijection from a Rabin pair to a good leaf set, negate
$\varphi_{\mathrm{S},n}$ to $\varphi_{\mathrm{R}, n}$, and use the duality
between Rabin and Streett automata.  Overall, we get the following lemma:

\begin{lemma}
    \label{acc-genRabin}
    For every $n \in \mathbb{N}$, every generalized Rabin automaton equivalent to
    $\varphi_{\mathrm{S}, n}$ has at least $\vert
    \gls(\varphi_{\mathrm{S}, n}) \vert$ acceptance pairs. 
    For every $n \in \mathbb{N}$, every Streett automaton equivalent to
    $\varphi_{\mathrm{R}, n}$ has at least $\vert
    \gls(\varphi_{\mathrm{S}, n}) \vert$ acceptance pairs. 
\end{lemma}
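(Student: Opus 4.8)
The plan is to prove the first inequality — the bound for generalized Rabin automata recognising $\varphi_{\mathrm{S},n}$ — and to obtain the Streett bound for $\varphi_{\mathrm{R},n}$ from it by duality. Complementing a deterministic automaton merely dualises its acceptance condition (swapping $\FIN{\cdot}$ with $\INF{\cdot}$ and $\wedge$ with $\vee$, exactly as noted in the preliminaries), so a Streett automaton with $k$ pairs for $\varphi_{\mathrm{R},n}$ complements to a Rabin automaton — a special generalized Rabin automaton — with $k$ pairs for $\neg\varphi_{\mathrm{R},n}\equiv\varphi_{\mathrm{S},n}$ (after renaming the atoms). A lower bound of $|\gls(\varphi_{\mathrm{S},n})|$ pairs on the latter therefore immediately yields the same bound on the former, and it suffices to treat the generalized Rabin case.

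Fix a deterministic generalized Rabin automaton $\mathcal{A}$ with $L(\mathcal{A})=L(\varphi_{\mathrm{S},n})$ and acceptance condition $\bigvee_i\bigl(\FIN{F_i}\wedge\bigwedge_j\INF{I_{i,j}}\bigr)$. For every good leaf set $\mathcal{L}\in\gls(\varphi_{\mathrm{S},n})$ I would build an ultimately periodic witness word $w_{\mathcal{L}}$ whose period keeps every $a_j$ present exactly when $\F\G a_j\in\mathcal{L}$ and lets every $b_j$ recur exactly when $\G\F b_j\in\mathcal{L}$; then $w_{\mathcal{L}}$ satisfies precisely the leaves of $\mathcal{L}$, hence $w_{\mathcal{L}}\in L(\varphi_{\mathrm{S},n})$. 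Since $\mathcal{A}$ is deterministic, the run on $w_{\mathcal{L}}$ is ultimately periodic and the set of its infinitely taken transitions is an accepting loop $R_{\mathcal{L}}$, which must be accepted by at least one pair; pick one and call it $p(\mathcal{L})$. The whole claim reduces to showing that $p$ is injective, for then $\mathcal{A}$ has at least $|\gls(\varphi_{\mathrm{S},n})|$ pairs, which by \Cref{good-leaf-sets} is $\varOmega(2^{n/2})$.

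For injectivity the natural tool is that the accepting loops of a single generalized Rabin pair are closed under union: if $R\cap F_i=\emptyset$ and $R'\cap F_i=\emptyset$ while both meet every $I_{i,j}$, then $R\cup R'$ avoids $F_i$ and still meets every $I_{i,j}$. Thus, if $p(\mathcal{L})=p(\mathcal{L}')=i$ and the union $R_{\mathcal{L}}\cup R_{\mathcal{L}'}$ is realisable as a genuine loop of $\mathcal{A}$, that loop is accepted by pair $i$, so the word it induces lies in $L(\varphi_{\mathrm{S},n})$. To make the union realisable I would funnel all witnesses into one strongly connected component, prefixing each $w_{\mathcal{L}}$ with a common finite word that drives $\mathcal{A}$ into a fixed component $C$ from which every period is reachable; then each $R_{\mathcal{L}}\cup R_{\mathcal{L}'}$ is a loop in $C$.

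The hard part — and the step I expect to be the real obstacle — is that the merged loop does \emph{not} automatically leave the language. Traversing $R_{\mathcal{L}}$ and $R_{\mathcal{L}'}$ infinitely often makes $\F\G a_j$ hold only for the \emph{intersection} of the two leaf sets but makes $\G\F b_j$ hold for their \emph{union}, and a single recurrent $\G\F b_j$ inherited from one side can discharge a disjunction of $\varphi_{\mathrm{S},n}$ by itself; already for $\varphi_{\mathrm{S},0}$ the merge of its two good leaf sets still satisfies the formula. A blunt union argument therefore fails, and the injectivity proof must be sharpened: using the minimality of good leaf sets together with the strictly alternating conjunction/disjunction nesting of $\varphi_{\mathrm{S},n}$, one has to pin the acceptance of each $R_{\mathcal{L}}$ to the precise conjunctive requirement coded by $\mathcal{L}$ (its $\FIN{\neg a_j}$ and $\INF{b_j}$ obligations), and then show that a single term $\FIN{F_i}\wedge\bigwedge_j\INF{I_{i,j}}$ cannot certify two distinct minimal requirements without admitting a run that switches off a leaf distinguishing $\mathcal{L}$ from $\mathcal{L}'$ and thereby drops out of $L(\varphi_{\mathrm{S},n})$. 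Carrying out this incompatibility argument carefully, rather than through a single union, is where the weight of the proof lies.
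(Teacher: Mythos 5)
Your skeleton matches the route the paper itself sketches: prove the generalized-Rabin lower bound for $\varphi_{\mathrm{S},n}$, transfer it to Streett automata for $\varphi_{\mathrm{R},n}$ by complementation (which merely dualises the acceptance condition and negates/renames the leaves), and witness the bound by mapping each good leaf set to a pair that accepts the loop of a witness word. The duality half of your argument is correct and is exactly what the paper does. The problem is that the entire content of the lemma is the injectivity of that map from $\gls(\varphi_{\mathrm{S},n})$ to acceptance pairs --- the paper's terse ``bijection from a Rabin pair to a good leaf set'' --- and your proposal does not prove it. You set up the union-closure argument, then correctly observe that it collapses, and what remains is a promissory note (``the weight of the proof lies'' in an incompatibility argument you do not give). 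That is a genuine gap, not a presentational one: without injectivity, nothing bounds the number of pairs from below.

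Your own diagnosis of why the union argument fails is accurate and worth taking seriously, because it shows the failure is structural rather than a matter of choosing better witness words. Merging two witness loops intersects the $\F\G$-leaves but unions the $\G\F$-leaves, and for these formulas the merged leaf set always still satisfies $\varphi_{\mathrm{S},n}$: already for $\varphi_{\mathrm{S},1}$, merging the witnesses of $\{\F\G a_1, \F\G a_0, \G\F b_0\}$ and $\{\G\F b_1, \F\G a_0, \G\F b_0\}$ yields a word satisfying $\{\F\G a_0, \G\F b_0, \G\F b_1\}$, which is again in the language. So no contradiction can ever come from unions alone. What a correct proof needs is a different closure property of a single generalized Rabin pair, namely that it cannot accept both ends of a nested chain $\ell_1 \subseteq \ell_2 \subseteq \ell_3$ whose middle loop is rejecting (since $\ell_2 \supseteq \ell_1$ meets every $\INF{\cdot}$-set, rejection forces $\ell_2$ to meet the $\FIN{\cdot}$-set, which then also poisons $\ell_3$), together with an argument that such nested loops with prescribed rejecting middles are realisable inside an \emph{arbitrary} deterministic automaton for the language --- e.g.\ by pumping with idempotent powers to embed an $x$-cycle inside an $(x^{k}y)$-cycle, or by an induction that restricts the alphabet (fixing $a_{n+1}$, respectively $b_{n+1}$, to be always true turns a $k$-pair automaton for $\varphi_{\mathrm{S},n+1}$ into $k$-pair automata for $\varphi_{\mathrm{R},n}$) and then shows the two restricted copies cannot share pairs. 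Appealing to minimality of good leaf sets and the alternating shape of $\varphi_{\mathrm{S},n}$, as you do, identifies the right combinatorics but supplies neither the closure property nor the realisability argument; note also that you never invoke the paper's normalisation (at most one $\F\G$-formula per good leaf set), which is what makes a good leaf set have the exact shape $\FIN{\cdot} \wedge \bigwedge_j \INF{\cdot}$ of a single generalized Rabin pair in the first place.
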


Note that $\varphi_{\mathrm{S}, n}$ and $\varphi_{\mathrm{R}, n}$
are Boolean combinations of formulas from $LTL_{\F\G,\G\F}(\X)$. Since
$\varphi_{\mathrm{S}, n}$ and $\varphi_{\mathrm{R}, n}$ do not
contain a $\X$ operator, the produced automaton of our construction has exactly
one state. According to \Cref{sec:fairness-ltl} one can see, that the structure
of the formula is directly translated into the acceptance condition. Therefore,
the length of the acceptance condition is equal to $\vert
\varphi_{\mathrm{S}, n} \vert$ (resp. $\vert \varphi_{\mathrm{R}, n}
\vert$).

\section{Experimental Evaluation}

Our experimental evaluation is two-part: At first, we evaluate our translation
by comparing the automata sizes and acceptance sizes. The second contribution in
our evaluation considers probabilistic model checking with the help of automata.
For every experiment, we set a time limit of $30$ minutes and a memory limit of
$10$ GB for every process.\footnote{%
All experiments were carried out on a computer with
two
Intel E5-2680 8-core CPUs at 2.70~GHz with 384GB of RAM running Linux.%
}

\subsection{Automata Sizes}

For the comparison of the acceptance conditions, we rely on counting the number of
$\FIN{\cdot}$ and $\INF{\cdot}$ occurring in the acceptance condition. We
compare our tool \delag{} with \rabinizer{}
\cite{EsparzaKS16} and \ltltotgba{} of \spot{}. %
Our benchmark consists of $94$ LTL formulas from
\cite{SomBloem00, DwyerAC99, EH00} where for $34$ formulas \delag{} was
able to translate a formula completely without using an external tool.   For these
formulas we do not need to rely on an external tool translating LTL to
deterministic automata. Should we require external tools to translate parts of
the formula, as described in \Cref{sec:general-ltl}, we  use \ltltotgba{} of
\spot{} as fallback solution.

Overall, \delag{} produced automata with a minimal state space in $77$ cases,
followed by \ltltotgba{} with $71$ formulas. For the comparison of the
acceptance, \delag{} has delivered the smallest acceptance for $59$ formulas,
whereas \ltltotgba{} could produce an automaton with a minimal acceptance
condition for $56$ formulas. As it can be seen in \Cref{table:states-benchmark}
\delag{}, \ltltotgba{} and \rabinizer{} show roughly the same behavior,
generating for $36$ vs. $37$ vs. $35$ formulas automata with size less or equal
than $3$, with a slight advantage for \delag{} producing more automata of size
one.
\begin{table*}[htbp]
\caption{Overview of the number of automata generated by the tools
\delag{}, \ltltotgba{}, \rabinizer{} with an upper bound of states (on the left side)
and with an upper bound of the number of leafs in the acceptance condition.}
\label{table:states-benchmark}
\centering
\begin{minipage}{0.49\textwidth}%
\begin{tabular}{l|rrrrrrr}
{\#}States $\leqslant x$ & $  1$ & $  2$ & $  3$ & $  4$ & $  6$ & $10$ &\hspace{-1.5 ex}$ > 10$\\\hline
\delag{} & $9$ & $17$ & $36$ & $59$ & $75$ & $87$ & $7$\\
\ltltotgba{} & $6$ & $17$ & $37$ & $60$ & $78$ & $89$ & $5$\\
\rabinizer{} & $6$ & $15$ & $35$ & $53$ & $75$ & $84$ & $10$\\
\end{tabular}%
\end{minipage}%
\begin{minipage}{0.49\textwidth}%
\hspace{2 ex}
\begin{tabular}{l|rrrrrr}
    Acc. size $\leqslant x$ & $  1$ & $  2$ & $  3$ & $  4$ & $  6$ &\hspace{-1 ex}$> 6$\\\hline
\delag{} & $50$ & $79$ & $83$ & $83$ & $90$ & $4$\\
\ltltotgba{} & $72$ & $84$ & $84$ & $86$ & $93$ & $1$\\
\rabinizer{} & $20$ & $34$ & $54$ & $67$ & $81$ & $13$\\
\end{tabular}
\end{minipage}
\end{table*}

The situation differs for the sizes of the acceptance condition: \ltltotgba{} generates $72$ automata with 
acceptance size  $1$ whereas \delag{} generates $50$ automata with
acceptance size $1$. For bigger acceptance sizes the number of generated
automata are similar for \ltltotgba{} and \delag{}. In comparison,
\rabinizer{} tends to produce automata with bigger acceptance sizes.

For the formulas $\varphi_{\mathrm{R},n}$ of \Cref{succinctness} the results are
as expected (see \Cref{table:acc-alternating}). \delag{} always produces the
smallest acceptance with a one state automaton, whereas the acceptance sizes of
the automata produced by \rabinizer{} grow faster, e.g., for $n=5$ and $n=7$
\rabinizer{} produces an automaton with acceptance size $45$ and $109$,
respectively. Both \delag{} and \rabinizer{} produce one state automata.
\ltltotgba{} behaves differently: The state space size of the automata grows
with $n$: for $n=1$ \ltltotgba{} produces an automaton with $7$ states and an
acceptance size of $4$, whereas for $n=3$ the state space increased to $21889$
states and an acceptance size of $20$. For $n > 3$ we were not able to produce
automata with \ltltotgba{}.

\begin{table*}[htbp]
\renewcommand{\arraystretch}{1.3}
\caption{Acceptance sizes for the alternating formula $\varphi_{\mathsf{R}, n}$;
$-$ means time-out or mem-out.}
\label{table:acc-alternating}
\centering
\begin{tabular}{r|rrrrrrrr}
$n = $ & $0$ & $1$ & $2$ & $3$ & $4$ & $5$ & $6$ & $7$\\\hline
\delag{} & $2$ & $4$ & $6$ & $8$ & $10$ & $12$ & $14$ & $16$\\
\ltltotgba{} & $2$ & $4$ & $8$ & $20$ & $-$ & $-$ & $-$ & $-$\\
\rabinizer{} & $2$ & $5$ & $7$ & $17$ & $19$ & $45$ & $47$ & $109$%
\end{tabular}
\end{table*}

For the evaluation of the history, we took the formula pattern
$\varphi_{\mathcal{H},n}$:

$$ \varphi_{\mathcal{H}, n} = \begin{cases}
    \bigl(\F{} \G{} (a \vee \X^n b)\bigr) \vee \varphi_{\mathcal{H}, n-1} &
    \text{if } n \textrm{ is even}\\
    \bigl(\F{} \G{} (\neg a \vee \X^n b)\bigr) \vee \varphi_{\mathcal{H}, n-1} & \text{otherwise}
\end{cases}$$

Every subformula $a \vee \X^n b$ (or $\neg a \vee \X^n b$)
commits the first position or the $n$-th position. So only two out of $n$
positions may be fixed, and hence we can share a lot of the state space
between the $\F{} \G{}$ formulas. 

The results can be found in \Cref{table:history-vs-after}. The state space of
\ltltotgba{} grows faster than \delag{}, the former being only capable to produce automata
up to $n=5$ before hitting the memory limit. For \rabinizer{}, we were not able
to produce automata for $n \geqslant 4$, since \rabinizer{} supports only a
limited number of acceptance set. This shows, that the acceptance condition
grows immensely.

\begin{table*}[htbp]
\renewcommand{\arraystretch}{1.3}
\caption{Automata sizes and number of acceptance sets for $\varphi_{\mathcal{H},n}$;
$-$ means time-out or mem-out.}
\label{table:history-vs-after}
\centering
\begin{tabular}{r|r|rrrrrrrr}
    \multicolumn{2}{r|}{$n = $} & $0$ & $1$ & $2$ & $3$ & $4$ & $5$ & $6$ & $7$\\\hline
    \multirow{2}{*}{\delag{}} & {\#}States & $1$ & $2$ & $4$ & $8$ & $16$ & $32$ & $64$ & $128$\\
    & Acc. size& $1$ & $2$ & $3$ & $4$ & $5$ & $6$ & $7$ & $8$\\\hline
    \multirow{2}{*}{\ltltotgba{}} & {\#}States & $2$ & $4$ & $21$ & $170$ & $1816$ & $22196$ & $-$ & $-$\\
     & Acc. size & $2$ & $2$ & $2$ & $2$ & $2$ & $2$ & $-$ & $-$\\\hline
    \multirow{2}{*}{\rabinizer{}} & {\#}States & $1$ & $2$ & $5$ & $11$ & $-$ & $-$ & $-$ & $-$\\
    & Acc. size & $1$ & $3$ & $7$ & $19$ & $-$ & $-$ & $-$ & $-$
\end{tabular}
\end{table*}

\subsection{Prism Runtimes}
\label{sec:prism}

We have implemented a routine for the analysis of MDPs in \prism. Here we
compare the behaviour of \prism{} if the three tools \delag{}, \ltltotgba{} from
\spot, and \rabinizer{} are employed as automata generation tools. As case study
we consider the IEEE 802.11 Wireless LAN Handshaking protocol. It describes a
resolving mechanism to stop interference if two stations want to send a message
at the same time. The key trick is, that all participating stations listen to
interference, and if a message has become garbled, the stations waits a random
amount of time (limited by an upper bound called \texttt{Backoff}) and then
tries to resend the message. We used the following properties:

\begin{itemize}
    \item ``If a message from sender $i$ has been garbled, it will be sent
        correctly in the future''\\
        $\varphi_1 = \bigwedge_{1 \leq i \leq n} \G{} \, (\texttt{garbled}_i \rightarrow \F{}
        \, \texttt{correct}_i)$
    \item ``Every sender sends at least one message correctly.'' : $\varphi_2 = \bigwedge_{1 \leq i \leq n} \F{} \, \texttt{correct}_i$
    \item ``The first time every station wants to send, the channel remains
        free for $k$ steps''\\
        $\varphi_3 = \bigwedge_{1 \leq i \leq n} \texttt{wait}_i \, \U \,  (\texttt{wait}_i \wedge
       \G^{\leq k} \, \texttt{free})$
       where $\G^{\leq k} \texttt{free} = \texttt{free} \wedge \X{} \,
       \texttt{free} \wedge \ldots \wedge \X^n \, \texttt{free}$
   \item ``Every station, that wants to send a message infinitely often, is
       able to send a message correctly infinitely often '' : $\varphi_4 = \bigwedge_{1 \leq i \leq n} (\G{} \, \F{} \,
       \texttt{wait}_i) \rightarrow  (\G{} \, \F{} \, \texttt{correct}_i)$
    \item ``Every station satisfies both the reachability formula $\varphi_2$
        and the fairness formula $\varphi_4$''\\
        $\varphi_5 = \bigl( \bigwedge_{1 \leq i \leq n} \F{} \,
        \texttt{correct}_i \bigr) \wedge \bigl(\bigwedge_{1 \leq i \leq n} (\G{} \, \F{} \,
       \texttt{wait}_i) \rightarrow  (\G{} \, \F{} \, \texttt{correct}_i) \bigr)$
\end{itemize}

Every property can be translated directly by \delag{} without external tools,
except $\varphi_1$, for which we translate the subformulas $\G{} \,
(\texttt{garbled}_i \rightarrow \F{} \, \texttt{correct}_i)$ with \ltltotgba{}
and then build the product. So $\varphi_1$ should be seen as a benchmark for the
product construction.

For all properties we asked for the minimal ($\pmin{\cdot}$) or maximal
($\pmax{\cdot}$) probability of the IEEE 802.11 handshaking model with two
stations and a \texttt{Backoff} of at most $3$ to satisfy the property. If a
formula has a window length (e.g. $\G^{\leq k}$) we uniformly choose $k=6$.
\Cref{table:prism} lists some measured time values and automata/product sizes.
All \prism{} experiments were carried out with the hybrid engine, an engine that
combines symbolic and explicit data structures offering a good compromise.

\begin{table*}[tbp]
\renewcommand{\arraystretch}{1.3}
\caption{\prism{} runtimes ($t_{MC}$) for the IEEE 802.11 case study enhanced with automata
sizes ($\vert A \vert$) and the number of BDD nodes in the product (BDD
size $\mathcal{M} \otimes \mathcal{A}$)}
\label{table:prism}
\centering
\begin{tabular}{r|rrr|rrr|rrr}
\multirow{3}{*}{Property} &
\multicolumn{3}{c|}{\delag} &
\multicolumn{3}{c|}{\ltltotgba} &
\multicolumn{3}{c}{\rabinizer}\\\cline{2-10}
&
\multirow{2}{*}{$\vert \mathcal{A} \vert$} &
\multicolumn{1}{c}{BDD size} &
\multirow{2}{*}{$t_{MC}$} &
\multirow{2}{*}{$\vert \mathcal{A} \vert$} &
\multicolumn{1}{c}{BDD size} &
\multirow{2}{*}{$t_{MC}$} &
\multirow{2}{*}{$\vert \mathcal{A} \vert$} &
\multicolumn{1}{c}{BDD size} &
\multirow{2}{*}{$t_{MC}$}\\
&
&
\multicolumn{1}{c}{$\mathcal{M} \otimes \mathcal{A}$} &
&
&
\multicolumn{1}{c}{$\mathcal{M} \otimes \mathcal{A}$} &
&
&
\multicolumn{1}{c}{$\mathcal{M} \otimes \mathcal{A}$} &
\\\hline
$\pmin{\varphi_{1}}$ & \pnodes{4} & \pnodes{31861} & \psec{6.583} & \pnodes{5} & \pnodes{44181} & \psec{9.507} & \pnodes{4} & \pnodes{31861} & \psec{32.222}\\
$\pmin{\varphi_{2}}$ & \pnodes{4} & \pnodes{61711} & \psec{165.448} & \pnodes{4} & \pnodes{61719} & \psec{160.624} & \pnodes{4} & \pnodes{61719} & \psec{158.991}\\
$\pmin{\varphi_{3}}$ & \pnodes{20} & \pnodes{46013} & \psec{27.455} & \pnodes{20} & \pnodes{46106} & \psec{26.23} & \pnodes{72} & \pnodes{47114} & \psec{27.974}\\
$\pmin{\varphi_{4}}$ & \pnodes{1} & \pnodes{30091} & \psec{42.594} & \pnodes{5} & \pnodes{30473} & \psec{6.798} & \pnodes{1} & \pnodes{30091} & \psec{47.017}\\
$\pmax{\varphi_{4}}$ & \pnodes{1} & \pnodes{30091} & \psec{5.732} & \pnodes{32} & \pnodes{129905} & \psec{273.79} & \pnodes{1} & \pnodes{30091} & \psec{6.026}\\
$\pmin{\varphi_{5}}$ & \pnodes{4} & \pnodes{61711} & \psec{120.856} & \pnodes{21} & \pnodes{65504} & \psec{91.646} & \pnodes{4} & \pnodes{61719} & \psec{125.519}\\
$\pmax{\varphi_{5}}$ & \pnodes{4} & \pnodes{61711} & \psec{152.727} & \pnodes{40} & \pnodes{182133} & \psec{861.603} & \pnodes{4} & \pnodes{61719} & \psec{165.061}\\
\end{tabular}
\end{table*}

First, the generation time for every automaton was below \psec{1}, except for
\rabinizer{} at $\pmin{\varphi_3}$ where it was \psec{1.844}. In $3$ cases
\prism{} in combination with \delag{} was the fastest. For $\pmin{\varphi_4}$
\ltltotgba{} took only \psec{6.798} in comparison to \psec{42.594} for \delag{}
despite the smaller automaton, since one heuristic applied for \ltltotgba{} that
did not apply for \delag{}: For the analysis of maximal end-components (MEC) we checked always at first, if
the whole MEC satisfies the acceptance condition, and only if not, we look for
accepting sub-end-components within the MEC. For \ltltotgba{} the whole MEC was
accepting, but for \delag{} one had to search for an accepting sub-end-component. Since in a symbolic representation SCC enumeration is costly,
\ltltotgba{} was much faster.

In general, one can see, that \delag{} produced every time the smallest
automaton, that also results in the smallest number of BDD nodes in the product
and comparatively small model checking times.

We have checked two more properties in the full version \cite{gandalf17web} as well as included a
comparison with the standard approach of \prism{} that uses an own
implementation of \texttt{ltl2ba} \cite{GO01} and Safra's determinisation
\cite{Safra88} and delivers automata with state-based acceptance.

\section{Conclusion}

We presented a general framework based on the product construction and
specialised translations for fragments of LTL to build deterministic Emerson-Lei automata. In
particular, for the important fairness fragment we established an efficient
construction, where the state space only depends on the nesting
depth of $\X$, and all of the complexity is shifted to the acceptance
condition. The general construction applies a range of additional optimisations: such as pushing temporal
operators down the syntax tree, piggybacking to reduce the number of 
acceptance sets and sharing of equal automata parts. In particular our history buffer 
approach reduces the state space,
since the buffer can be shared between automata for different
subformulas. If a formula does not belong to one of our explicitly supported fragments, 
we can run an external LTL to deterministic automaton
translator and incorporate the resulting automaton via product construction and
lifting.

Benchmarking this approach has shown the potential of our method. Standard
benchmarks highlight the potential of allowing more complex acceptance conditions,
our tool had a slight advantage in the state space over \spot. Those results also reflect in the area
of probabilistic model checking, where we analysed the IEEE 802.11 Handshaking
protocol.

However, the heuristics presented here are not complete, and this approach
should be understood as a framework. So, one direction for future work is to add
more explicitly supported LTL fragments. Another point would be to analyse the subformulas, which cannot be
translated directly and choose an external tool, that behaves well for these
specific subformulas. For example, it is well-known, that obligation LTL
formulas can be translated to weak DBA, and then efficiently minimised. This is
implemented in \spot. Another direction one could take a deeper look into, is to
start with a non-deterministic Büchi automaton, and try to find small
deterministic automaton with a complex acceptance condition. Of course, general
methods to shrink the state space like bisimulation could be also applied. Also,
the particular ingredients of our transformation could optimised further, e.g.
the history could be allocated dynamically, and therefore reduce the state space
even further without increasing the acceptance condition complexity.

\paragraph{Acknowledgments.} The authors want to thank the anonymous reviewers for the constructive feedback.

\bibliographystyle{eptcs}
\bibliography{lit}
\end{document}